   \newcommand{\InESAVer}[1]{}%
   \newcommand{\InNotESAVer}[1]{#1}%
   \newcommand{\SeeFullVer}{}%
   \newcommand{\InESAVer}[1]{#1}%
   \newcommand{\InNotESAVer}[1]{}%
   \newcommand{\SeeFullVer}{ See the full version of the paper for details.}%
\def\line#1{\overleftrightarrow{#1}}
\newcommand{\remove}[1]{}
\newcommand{\unipoint}{unipoint\xspace}
\newcommand{\multipoint}{multipoint\xspace}
\newcommand{\Multipoint}{Multipoint\xspace}
\newcommand{\MultipointSet}{\{(\SiX{1},\PriX{1}),\ldots,(\SiX{m},\PriX{m})\}}
\newcommand{\UPntSet}{\EuScript{P}}
\newcommand{\pnt}{p}
\newcommand{\query}{q}        
\newcommand{\Sites}{P}
\definecolor{blue25}{rgb}{0,0,0.7}
\newcommand{\emphic}[2]{%
     		\textcolor{blue25}{%
         		\textbf{\emph{#1}}}%
         		\index{#2}}
\newcommand{\emphi}[1]{\emphic{#1}{#1}}
\newcommand{\mprChar}{\mu} 
\newcommand{\mprX}[1]{\mprChar\pth{#1}} 
\newcommand{\hmprX}[1]{\widehat{\mprChar}\pth{#1}} 
\newcommand{\pr}{\gamma}
\newcommand{\PR}{\ensuremath{\Gamma}}
\newcommand{\tukey}{\tau}
\newcommand{\Prob}[1]{\ensuremath{\mathrm{Pr}}\!\left[\,#1\, \right]}
\newcommand{\Probsmall}[1]{\ensuremath{\mathrm{Pr}\,}\!\big[\,#1\, \big]}
\newcommand{\some}{\beta}
\newcommand{\SetB}{B}
\renewcommand{\Re}{{\rm I\!\hspace{-0.025em} R}}
\newcommand{\brc}[1]{\left\{ {#1} \right\}}
\newcommand{\pth}[2][\!]{#1\left({#2}\right)}
\newcommand{\w}{\pr}
\newcommand{\Q}{\EuScript{Q}}
\newcommand{\etal}{\textsl{et~al.\ }}
\newcommand\fixedvalue{\sigma}
   \newcommand{\mparagraph}[1]{\medskip \noindent{\emph{\textbf{{#1}}}}   }
   \newcommand{\mparagraph}[1]{\paragraph{#1}}
\newcommand{\TT}{\EuScript{T}}
\newcommand{\RSample}{R}
\newcommand{\tDepth}{t}
\newcommand{\simplex}{S}
\newcommand{\tradeoffPara}{t}
\newcommand{\UpperHull}{\EuScript{U}}
\newcommand{\TauLevel}{\Lambda}
\newcommand{\PntSetA}{\Q}
\newcommand{\CHChar}{\text{\textsc{ch}}}%
\newcommand{\CHX}[2][\!]{\CHChar\pth[#1]{#2}}
\newcommand{\CHSomeX}[2][\!]{\CHChar_\some\pth[#1]{#2}}
\newcommand{\ConvexTauHull}[1]{\CHChar_\some(#1)}
\newcommand{\ceil}[1]{\left\lceil {#1} \right\rceil}
\newcommand{\atgen}{\symbol{'100}}%
\newcommand{\SarielThanks}[1]{\thanks{%
      Department of Computer Science; %
      University of Illinois; %
      201 N. Goodwin Avenue; %
      Urbana, IL, 61801, USA;%
      {\tt sariel\atgen{}uiuc.edu}; %
      {\tt \url{http://sarielhp.org/} }%
      . %
      #1}}
\newcommand{\PankajThanks}[1]{\thanks{
      Department of Computer Science; %
      Duke University; %
      Durham, NC, 27708, USA; %
      {\tt pankaj\atgen{}cs.duke.edu <http://www.cs.duke.edu/>}; %
      {\tt \url{http://www.cs.duke.edu/~pankaj/ }}.%
   }%
}
\newcommand{\WuzhouThanks}[1]{\thanks{%
      Department of Computer Science; %
      Duke University; %
      Durham, NC, 27708, USA; %
      {\tt wuzhou\atgen{}cs.duke.edu <http://www.cs.duke.edu/>}; %
      {\tt \url{http://www.cs.duke.edu/~wuzhou/ }}.%
   }%
}
\newcommand{\SubhashThanks}[1]{\thanks{%
      Department of Computer Science; %
      University of California; %
      Santa Barbara; %
      CA, 93106, USA; %
      {\tt suri\atgen{}cs.ucsb.edu}; %
      {\tt \url{http://www.cs.ucsb.edu/~suri}}.%
   }}%
\newcommand{\HakanThanks}[1]{\thanks{%
      Department of Computer Science; %
      University of California; %
      Santa Barbara; %
      CA, 93106, USA; %
      {\tt hakan\atgen{}cs.ucsb.edu}; %
      {\tt \url{http://www.cs.ucsb.edu/~hakan}}.%
   }}
\newcommand{\Caratheodory}{Carath{\'e}odory\xspace}
\newcommand{\SimplexSet}{\EuScript{S}}
\newcommand{\lemlab}[1]{\label{lemma:#1}}
\newcommand{\lemref}[1]{Lemma~\ref{lemma:#1}}
\newcommand{\theolab}[1]{\label{theorem:#1}}
\newcommand{\theoref}[1]{Theorem~\ref{theorem:#1}}
\newcommand{\cardin}[1]{\left| {#1} \right|}%
\newcommand{\eps}{{\varepsilon}}%
\renewcommand{\th}{th\xspace}%
\newcommand{\probX}[1]{\mprX{#1}}%
\newcommand{\tprobX}[1]{\widetilde{\mprChar}\pth{#1}}%
\newcommand{\Int}{\mathop{\mathrm{int}}}
\newcommand{\vect}[1]{\ray{#1}}
\def\SHOWCOMMENTS{0} 
\def\ONE{1}
\newcommand{\seg}[1]{{#1}}
\newcommand{\Line}[1]{\overleftrightarrow{#1}}
\newcommand{\ray}[1]{\overrightarrow{#1}}
\newcommand{\SFACE}{\ensuremath{P_\alpha}}
\newcommand{\SRIDGE}{\ensuremath{L_\beta}}
\newcommand{\spi}[1]{\ensuremath{p'_{#1}}}
\newcommand{\siX}[1]{\ensuremath{\pnt_{#1}}}
\newcommand{\SiX}[1]{\ensuremath{P_{#1}}}
\newcommand{\sijY}[2]{\ensuremath{p^{#2}_{#1}}}
\newcommand{\ii}{\ensuremath{u}}
\newcommand{\jj}{\ensuremath{v}}
\newcommand{\priX}[1]{\ensuremath{\pr_{#1}}}
\newcommand{\prijY}[2]{\ensuremath{\pr_{#1}^{#2}}}
\newcommand{\PriX}[1]{\ensuremath{\PR_{#1}}}
\newcommand{\st}{\ensuremath{\,\mid\,}}
\newcommand{\mpt}{\ensuremath{q}}
\newcommand{\GroupX}[1]{\ensuremath{G_{#1}}}
\newcommand{\SGroup}{\ensuremath{G}}
\newcommand{\CC}{C}
\newcommand{\CCP}{C'}
\newcommand{\VV}{V}
\newcommand{\VVP}{V'}
\newcommand{\f}{\ensuremath{f}}
\newcommand{\rray}{\ray{r}(\siX{i}',\query')}
\newcommand{\HYP}{H}
\newcommand{\hakan}[1]{%
   \ifx \SHOWCOMMENTS \ONE
   \todo[fancyline,color=white,bordercolor=red,linecolor=gray]{H: #1}
   \fi%
}
\newcommand{\subhash}[1]{%
   \ifx \SHOWCOMMENTS \ONE
   \todo[fancyline,color=white,bordercolor=blue,linecolor=gray]{SS:
      #1} 
   \fi%
}
\newcommand{\pankaj}[1]{%
   \ifx \SHOWCOMMENTS \ONE
   \todo[fancyline,color=white,bordercolor=black,linecolor=gray]{P:
      #1}%
   \fi%
}
\newcommand{\sariel}[1]{%
   \ifx \SHOWCOMMENTS \ONE
   \todo[fancyline,color=white,bordercolor=yellow,linecolor=gray]{S:
      #1}%
   \fi%
}
\newcommand{\wuzhou}[1]{%
   \ifx \SHOWCOMMENTS \ONE
   \todo[fancyline,color=white,bordercolor=green,linecolor=gray]{W:
      #1}%
   \fi%
}
\newcommand*\circled[1]{\tikz[baseline=(char.base)]{\node[shape=circle,draw,inner sep=1pt] (char) {#1};}}
\newcommand{\inv}{\overline}
\newcommand{\lowd}{\lambda}
\newcommand{\HyperplaneSet}{H}
\newcommand{\Arrangement}{\EuScript{A}}
\newcommand{\seclab}[1]{\label{sec:#1}}
\newcommand{\secref}[1]{Section~\ref{sec:#1}}
\newcommand{\apndlab}[1]{\label{apnd:#1}}
\newcommand{\apndref}[1]{Appendix~\ref{apnd:#1}}
\newcommand{\figlab}[1]{\label{fig:#1}}
\newcommand{\figref}[1]{Figure~\ref{fig:#1}}
\newcommand{\itemlab}[1]{\label{item:#1}}
\newcommand{\itemref}[1]{(\ref{item:#1})}
\newcommand{\thmlab}[1]{{\label{theo:#1}}}
\newcommand{\thmref}[1]{Theorem~\ref{theo:#1}}
\newcommand{\prA}{\ensuremath{\pr(\SetB)}}
\newcommand{\SetBA}{{\SetB \cup \{\query\}}}
\newcommand{\SetBAP}{\SetB' \cup \{\query'\}}
\newcommand{\FaceIndices}{\ensuremath{I_\alpha}}
\newcommand{\altequiv}{=}
   \newtheorem{theorem}{Theorem}[section]
   \newtheorem{lemma}[theorem]{Lemma}%
   \newtheorem{corollary}[theorem]{Corollary}
\newcommand{\si}[1]{#1}
   \theoremstyle{plain}%
   \newtheorem*{remark:unnumbered}[theorem]{Remark}%
   \newcommand{\myqedsymbol}{\rule{2mm}{2mm}}
   \theoremstyle{nonumberplain}%
   \newtheorem{proof}{Proof:}%
\title{Convex Hulls under Uncertainty}
\author{%
   Pankaj K. Agarwal%
   \InNotESAVer{\PankajThanks{}}%
   \InESAVer{\inst{1}}%
   \and%
   Sariel Har-Peled%
   \InNotESAVer{\SarielThanks{}}%
   \InESAVer{\inst{2}}%
   \and %
   Subhash Suri%
   \InNotESAVer{\SubhashThanks{}}%
   \InESAVer{\inst{3}} %
   \and 
   \InESAVer{\newline} 
   Hakan Y{\i}l{}d{\i}z%
   \InNotESAVer{\HakanThanks{}}%
   \InESAVer{\inst{3}}%
   \and%
   Wuzhou Zhang%
   \InNotESAVer{\WuzhouThanks{}}%
   \InESAVer{\inst{1}}%
}
   \institute{%
      Duke University%
      \and%
      University of Illinois, Urbana-Champaign%
      \and%
      University of California, Santa Barbara }
\begin{document}

\maketitle

\begin{abstract}
    We study the convex-hull problem in a probabilistic setting,
    motivated by the need to handle data uncertainty inherent in many
    applications, including sensor databases, location-based services
    and computer vision. In our framework, the uncertainty of each
    input site is described by a probability distribution over a
    finite number of possible locations including a \emphi{null}
    location to account for non-existence of the point. Our results
    include both exact and approximation algorithms for computing the
    probability of a query point lying inside the convex hull of the
    input, time-space tradeoffs for the membership queries, a
    connection between Tukey depth and membership queries, as well as
    a new notion of $\some$-hull that may be a useful representation
    of uncertain hulls.
\end{abstract}

%
%

\section{Introduction}

The convex hull of a set of points is a fundamental structure in
mathematics and computational geometry, with wide-ranging applications
in computer graphics, image processing, pattern recognition, robotics,
combinatorics, and statistics. Worst-case optimal as well as
output-sensitive algorithms are known for computing the convex hull;
see the survey \cite{s-chc-97} for an overview of known results.

In many applications, such as sensor databases, location-based
services or computer vision, the location and sometimes even the
existence of the data is uncertain, but statistical information can be
used as a probability distribution guide for data.  This raises the
natural computational question: what is a robust and useful convex
hull representation for such an uncertain input, and how well can we
compute it? We explore this problem under two simple models in which
both the location and the existence (presence) of each point is
described probabilistically, and study basic questions such as what is
the probability of a query point lying inside the convex hull, or what
does the probability distribution of the convex hull over the space
look like.

\mparagraph{Uncertainty models.} %
We focus on two models of uncertainty: unipoint and multipoint.  In
the \emphi{{\unipoint} model}, each input point has a fixed location
but it only exists probabilistically.  Specifically, the input
$\UPntSet$ is a set of pairs $\{(\siX{1},\priX{1}),\ldots,$
$(\siX{n},\priX{n})\}$ where each $\siX{i}$ is a point in $\Re^d$ and
each $\priX{i}$ is a real number in the range $(0,1]$ denoting the
probability of $\siX{i}$'s existence. The existence probabilities of
different points are independent; $P = \brc{p_1, \ldots, p_n}$ denotes
the set of sites in $\UPntSet$.

In the \emphi{\multipoint model}, each point probabilistically exists
at one of multiple possible sites. Specifically, $\UPntSet$ is a set of
pairs
\begin{math}
    \brc{ (\SiX{1},\PriX{1}),\ldots,(\SiX{m},\PriX{m})}
\end{math}
where each $\SiX{i}$ is a set of $n_i$ points and each $\PriX{i}$ is a
set of $n_i$ real values in the range $(0,1]$.  The set
$\SiX{i}=\brc{\sijY{i}{1},\ldots,\sijY{i}{n_i}}$ describes the
possible sites for the $i$\th point of $\UPntSet$ and the set $\PriX{i}
= \brc{\prijY{i}{1},\ldots,\prijY{i}{n_i}}$ describes the associated
probability distribution.  The probabilities $\prijY{i}{j}$ correspond
to disjoint events and therefore sum to at most $1$.  By allowing the
sum to be less than one, this model also accounts for the possibility
of the point not existing (i.e. the \emphi{null} location)---thus, the
multipoint model generalizes the unipoint model. In the multipoint
model, $P = \bigcup_{i = 1}^m P_i$ refers to the set of all sites and
$n =|P|$.

\mparagraph{Our results.} %
The main results of our paper can be summarized as follows.

\begin{compactenum}[(A)]
    \item We show (in \secref{mem}) that the membership probability of
    a query point $\query \in \Re^d$, namely, the probability of
    $\query$ being inside the convex hull of $\UPntSet$, can be
    computed in $O(n\log{n})$ time for $d=2$. For $d \ge 3$, assuming
    the input and the query point are in general position, the
    membership probability can be computed in $O(n^{d})$ time.
    The results hold for both unipoint and multipoint models.
    
    \item Next we describe two algorithms (in \secref{mem_queries}) to
    preprocess $\UPntSet$ into a data structure so that for a query
    point its membership probability in $\UPntSet$ can be answered
    quickly. The first algorithm constructs a \emphi{probability map}
    $\mathbb{M}(\UPntSet)$, a partition of $\Re^d$ into convex cells,
    so that all points in a single cell have the same membership
    probability.  We show that $\mathbb{M}(\UPntSet)$ has size
    $\Theta(n^{d^2})$, and for $d=2$ it can be computed in optimal
    $O(n^4)$ time.  The second one is a sampling-based Monte Carlo
    algorithm for constructing a near-linear-size data structure that
    can approximate the membership probability with high likelihood in
    sublinear time for any fixed dimension.
    
    \item We show (in \secref{tukey}) a connection between the
    membership probability and the Tukey depth, which can be used to
    approximate cells of high membership probabilities.  For $d = 2$,
    this relationship also leads to an efficient data structure.
    
    \item Finally, we introduce the notion of \emphi{$\some$-hull} (in
    \secref{some_hull}) as another approximate representation for
    uncertain convex hulls in the multipoint model: a convex set $C$
    is called \emphi{$\some$-dense} for $\UPntSet$, for $\some \in
    [0,1]$, if $C$ contains at least $\some$ fraction of each
    uncertain point. The $\some$-hull of $\UPntSet$ is the intersection
    of all $\some$-dense sets for $\UPntSet$. We show that for $d = 2$,
    the $\some$-hull of $\UPntSet$ can be computed in $O(n\log^3 n)$
    time.
\end{compactenum}

\mparagraph{Related work.} %
There is extensive and ongoing research in the database community on
uncertain data; see~\cite{drs-pddd-09} for a survey. In the
computational geometry community, the early work relied on
deterministic models for uncertainty (see e.g.~\cite{m-diicg-09}),
but more recently probabilistic models of uncertainty, which are
closer to the models used in statistics and machine learning, have
been explored~\cite{acty-iud-09, aahpy-nnsuu-13, p-sddgs-09,
   kcs-cppop-11, kcs-smstes-11, svy-mlchu-13}.  The convex-hull
problem over uncertain data has received some attention very
recently. Suri \etal ~\cite{svy-mlchu-13} showed that the problem of computing the most likely
convex hull of a point set in the {\multipoint} model is NP-hard. Even
in the {\unipoint} model, the problem is NP-hard for $d \ge 3$. They
also presented an $O(n^3)$-time algorithm for computing the most
likely convex hull under the {\unipoint} model in $\Re^2$. Zhao
\etal~\cite{zyn-pchqt-12} investigated the problem of computing the
probability of each uncertain point lying on the convex hull, where
they aimed to return the set of (uncertain) input points whose
probabilities of being on the convex hull are at least some threshold. J{\o}rgensen \etal~\cite{jlj-gcotp-11} showed that the distribution of properties, such as areas or perimeters, of the convex hull of $\UPntSet $ may have $\Omega(\Pi_{i = 1}^{m} n_{i})$ complexity if all the sites lie on or near a circle.

%
%

\section{Computing the Membership Probability}
\seclab{mem}

For simplicity, we describe our algorithms under the {\unipoint}
model, and then discuss their extension to the {\multipoint} model.
We begin with the 2D case.

\subsection{The two-dimensional case} %
Let $\UPntSet = \brc{(\siX{1},\priX{1}),\ldots,(\siX{n},\priX{n})}$ be
a set of $n$ uncertain points in $\Re^2$ under the {\unipoint}
model. Recall that $\Sites = \brc{\siX{1}, \ldots, \siX{n}}$ is the
set of all sites of $\UPntSet$. 
\hakan{Added this sentence.} For simplicity of description, we assume that the sites are in general position, i.e., no two share coordinates and no three are collinear.
A subset $\SetB \subseteq \Sites$ is the
outcome of a probabilistic experiment with probability
\begin{align*}
    \prA%
    =%
    \prod_{\siX{i} \in \SetB} {\priX{i}} \; \times\; \prod_{ \siX{i}
       \notin \SetB}{\inv{\priX{i}}},
\end{align*}
where $\inv{\priX{i}}$ is the complementary probability $1 -
\priX{i}$.
%
By definition, for a point $\query$, the \emphi{probability} of $\query$ to lie in the convex-hull of $\SetB$ \hakan{Replaced ',' with '$\st$' in the summation below.} is
\begin{align*}
    \displaystyle%
    \mprX{\query}%
    =%
    \sum_{{\SetB \subseteq \Sites \st \,
          \query\,\in\,\CHX[]{\SetB}}}{\prA},
\end{align*}
where $\CHX{\SetB}$ is the convex hull of $\SetB$.  This unfortunately
involves an exponential number of terms.  However, observe that for a
subset $\SetB \subseteq \Sites$, the point $\query$ is \emphi{outside}
$\CHX{\SetB}$, if and only if $\query$ is a vertex of the convex hull
$\CHX{\SetB \cup \brc{\query}}$. So, let $\CC = \CHX{\SetB \cup
   \brc{\query}}$, and $\VV$ be the set of vertices of $\CC$. Then, we
have that $\mprX{\query} \:=\: 1 - \Prob{\query \in \VV}$.

If $\SetB = \emptyset$, then clearly $\CC = \brc{\query}$ and $\query
\in \VV$. Otherwise, $|\VV| \ge 2$ and $\query \in \VV$ implies that
$\query$ is an endpoint of exactly two edges on the boundary of
$\CC$.\footnote{If $\SetB$ consists of a single site $\siX{i}$, then
   $\CC$ is the line segment $\seg{\query\siX{i}}$. In this case, we
   consider the boundary of $\CC$ to be a cycle formed by two edges:
   one going from $\query$ to $\siX{i}$, and one going from $\siX{i}$
   back to $\query$.}
\hakan{Changed this sentence.} %
In this case, the first edge following
$\query$ in the counter-clockwise order of $\CC$ is called the
\emphi{witness edge} of $\query$ being in $\VV$.  Hence, $\query \in
\VV$ if and
only if $\SetB = \emptyset$ or (exclusively) $\SetB$ has a witness
edge, i.e.,
\begin{align*}
    \Prob{\Bigl. \query \in \VV}%
    =%
    \Prob{\Bigl. \SetB = \emptyset} + \sum_{i=1}^n \Prob{\Bigl. \query
       \siX{i} \text{ is the witness edge of } \query \notin
       \CHX{\SetB}}.
\end{align*}
The first term can be computed in linear time. To compute the $i$\th
term in the summation, we observe that $\query\siX{i}$ is the witness
edge of $\SetB$ if and only if $\siX{i} \in \SetB$ and $\SetB$
contains no sites to the right of the oriented line spanned by the
vector $\vect{\query\siX{i}}$, and the corresponding probability is
\begin{math}
    \Bigl. \priX{i} \cdot \prod_{\siX{j} \in \GroupX{i}}
    \inv{\priX{j}},
\end{math}
where $\GroupX{i}$ is the set of sites to the right of
$\vect{\query\siX{i}}$. This expression can be computed in $O(n)$
time. It follows that one can compute $1 - \mprX{\query}$, and
therefore $\mprX{\query}$, in $O(n^2)$ time.
%
\hakan{Changed the next few sentences.} The computation time can be improved to $O(n\log{n})$
as described in the following paragraph.

\mparagraph{Improving the running time.} %
%
%
\begin{figure}[t]
    \begin{tabular}{cccc}
      \begin{minipage}{0.08\linewidth}
        ~
      \end{minipage}
      \begin{minipage}{0.44\linewidth}
        \centerline{
\definecolor{000000}{RGB}{0,0,0}
\definecolor{808080}{RGB}{128,128,128}
\definecolor{a0a0a4}{RGB}{160,160,164}
\begin{tikzpicture}[x=1cm,y=1cm]
\clip (6,0) rectangle (8,3);
\node [color=000000,inner sep=0.5pt,above,shift={(0,3pt)}] at
(6.99931,2.38207) {$\siX{2}$};
\node [color=000000,inner sep=0.5pt,above,shift={(0,3pt)}] at
(7.53016,1.82715) {$\siX{1}$};
\node [color=000000,inner sep=0.5pt,above right,shift={(3pt,3pt)}] at (6.0587,2.24139) {$...$};
\node [color=000000,inner sep=0.5pt,right,shift={(3pt,0)}] at (11.4412,1.66687) {$\mpt$};
\node [color=000000,inner sep=0.5pt,left,shift={(-3pt,0)}] at (6.47966,1.38936) {$\mpt$};
\node [color=000000,inner sep=0.5pt,below right,shift={(3pt,-3pt)}] at (6.37052,0.901597) {$...$};
\node [color=000000,inner sep=0.5pt,above,shift={(0,3pt)}] at (11.4617,0.338693) {$\circled{$W_i$}$};
\node [color=000000,inner sep=0.5pt,above,shift={(0,3pt)}] at
(7.40633,0.803769) {$\siX{n}$};
\node [color=000000,inner sep=0.5pt,right,shift={(3pt,0)}] at
(11.9193,2.41117) {$\siX{i-1}$};
\node [color=000000,inner sep=0.5pt,left,shift={(-3pt,0)}] at (10.8907,2.36988) {$\siX{i}$};
\node [color=000000,inner sep=0.5pt,right,shift={(3pt,0)}] at (2.14692,1.31496) {$\mpt$};
\draw[color=000000,line width=1] (1.6267,0.406649) -- (2.14692,1.31496);
\draw[color=000000,line width=1] (0.635818,0.637855) -- (1.6267,0.406649);
\draw[color=808080,line width=1,densely dashed] (6.47966,1.38936) -- (6.37052,0.901597);
\draw[color=a0a0a4,line width=1,dotted] (10.4354,0.101089) -- (12.4139,3.18114);
\draw[color=808080,line width=1,densely dashed] (6.47966,1.38936) -- (6.0587,2.24139);
\draw[color=808080,line width=1,densely dashed] (6.47966,1.38936) -- (7.40633,0.803769);
\draw[color=a0a0a4,line width=1,dotted] (10.3102,3.11133) -- (12.5682,0.227505);
\draw[color=808080,line width=1,densely dashed] (6.99931,2.38207) -- (6.47966,1.38936);
\draw[color=808080,line width=1,densely dashed] (7.53016,1.82715) -- (6.47966,1.38936);
\draw[color=000000,line width=1] (0.974371,2.54531) -- (0.404612,1.41405);
\draw[->,color=000000,line width=2] (2.14692,1.31496) -- (1.8414,2.18199);
\draw[color=000000,line width=1] (1.8414,2.18199) -- (0.974371,2.54531);
\draw[color=000000,line width=1] (0.404612,1.41405) -- (0.635818,0.637855);
\path [fill=000000] (6.99931,2.38207) circle (1.5pt);
\path [fill=000000] (6.47966,1.38936) circle (1.5pt);
\path [fill=000000] (7.53016,1.82715) circle (1.5pt);
\path [fill=000000] (7.40633,0.803769) circle (1.5pt);
\path [fill=000000] (10.8907,2.36988) circle (1.5pt);
\path [fill=000000] (11.4412,1.66687) circle (1.5pt);
\path [fill=000000] (11.9193,2.41117) circle (1.5pt);
\path [fill=000000] (0.974371,2.54531) circle (1.5pt);
\path [fill=000000] (1.6267,0.406649) circle (1.5pt);
\path [fill=000000] (0.635818,0.637855) circle (1.5pt);
\path [fill=000000] (0.404612,1.41405) circle (1.5pt);
\path [fill=000000] (1.8414,2.18199) circle (1.5pt);
\path [fill=000000] (2.14692,1.31496) circle (1.5pt);
\end{tikzpicture}
}
        \caption{Sites in radial order around $\query$.}
        \figlab{radial2}
      \end{minipage}
      &
      \begin{minipage}{0.44\linewidth}
        \centerline{
\definecolor{000000}{RGB}{0,0,0}
\definecolor{808080}{RGB}{128,128,128}
\definecolor{a0a0a4}{RGB}{160,160,164}
\begin{tikzpicture}[x=1cm,y=1cm]
\clip (10.3,0) rectangle (13,3);
\node [color=000000,inner sep=0.5pt,above,shift={(0,3pt)}] at (6.99931,2.38207) {$\siX{2}$};
\node [color=000000,inner sep=0.5pt,above,shift={(0,3pt)}] at (7.53016,1.82715) {$\siX{1}$};
\node [color=000000,inner sep=0.5pt,above right,shift={(3pt,3pt)}] at (6.0587,2.24139) {$...$};
\node [color=000000,inner sep=0.5pt,right,shift={(3pt,0)}] at (11.4412,1.66687) {$\mpt$};
\node [color=000000,inner sep=0.5pt,left,shift={(-3pt,0)}] at (6.47966,1.38936) {$\mpt$};
\node [color=000000,inner sep=0.5pt,below right,shift={(3pt,-3pt)}] at (6.37052,0.901597) {$...$};
\node [color=000000,inner sep=0.5pt,above,shift={(0,3pt)}] at (11.4617,0.338693) {$\circled{$W_i$}$};
\node [color=000000,inner sep=0.5pt,above,shift={(0,3pt)}] at
(7.40633,0.803769) {$\siX{n}$};
\node [color=000000,inner sep=0.5pt,right,shift={(3pt,0)}] at
(11.9193,2.41117) {$\siX{i-1}$};
\node [color=000000,inner sep=0.5pt,left,shift={(-3pt,0)}] at (10.8907,2.36988) {$\siX{i}$};
\node [color=000000,inner sep=0.5pt,right,shift={(3pt,0)}] at (2.14692,1.31496) {$\mpt$};
\draw[color=000000,line width=1] (1.6267,0.406649) -- (2.14692,1.31496);
\draw[color=000000,line width=1] (0.635818,0.637855) -- (1.6267,0.406649);
\draw[color=808080,line width=1,densely dashed] (6.47966,1.38936) -- (6.37052,0.901597);
\draw[color=a0a0a4,line width=1,dotted] (10.4354,0.101089) -- (12.4139,3.18114);
\draw[color=808080,line width=1,densely dashed] (6.47966,1.38936) -- (6.0587,2.24139);
\draw[color=808080,line width=1,densely dashed] (6.47966,1.38936) -- (7.40633,0.803769);
\draw[color=a0a0a4,line width=1,dotted] (10.3102,3.11133) -- (12.5682,0.227505);
\draw[color=808080,line width=1,densely dashed] (6.99931,2.38207) -- (6.47966,1.38936);
\draw[color=808080,line width=1,densely dashed] (7.53016,1.82715) -- (6.47966,1.38936);
\draw[color=000000,line width=1] (0.974371,2.54531) -- (0.404612,1.41405);
\draw[->,color=000000,line width=2] (2.14692,1.31496) -- (1.8414,2.18199);
\draw[color=000000,line width=1] (1.8414,2.18199) -- (0.974371,2.54531);
\draw[color=000000,line width=1] (0.404612,1.41405) -- (0.635818,0.637855);
\path [fill=000000] (6.99931,2.38207) circle (1.5pt);
\path [fill=000000] (6.47966,1.38936) circle (1.5pt);
\path [fill=000000] (7.53016,1.82715) circle (1.5pt);
\path [fill=000000] (7.40633,0.803769) circle (1.5pt);
\path [fill=000000] (10.8907,2.36988) circle (1.5pt);
\path [fill=000000] (11.4412,1.66687) circle (1.5pt);
\path [fill=000000] (11.9193,2.41117) circle (1.5pt);
\path [fill=000000] (0.974371,2.54531) circle (1.5pt);
\path [fill=000000] (1.6267,0.406649) circle (1.5pt);
\path [fill=000000] (0.635818,0.637855) circle (1.5pt);
\path [fill=000000] (0.404612,1.41405) circle (1.5pt);
\path [fill=000000] (1.8414,2.18199) circle (1.5pt);
\path [fill=000000] (2.14692,1.31496) circle (1.5pt);
\end{tikzpicture}
}
        \caption{The set $W_i$.}
        \figlab{switch2}
      \end{minipage}
      \begin{minipage}{0.04\linewidth}
        ~
      \end{minipage}
    \end{tabular}
\end{figure}
%
%
The main
idea is to compute the witness edge probabilities in radial order
around $\query$. We sort all sites in counter-clockwise order around
$\query$. Without loss of generality, assume that the circular
sequence $\siX{1},\ldots,\siX{n}$ is the resulting order. (See
\figref{radial2}.) We first compute the probability that
$\query\siX{1}$ is the witness edge in $O(n)$ time. Then, for
increasing values of $i$ from $2$ to $n$, we compute the probability
that $\query\siX{i}$ is the witness edge by updating the probability
for $\query\siX{i-1}$, in $O(1)$ amortized time. In particular, let
$W_i$ denote the set of sites in the open wedge bounded by the vectors
$\vect{\query\siX{i-1}}$ and $\vect{\query\siX{i}}$. (See
\figref{switch2}.)  Notice that $\GroupX{i} = \GroupX{i-1} \cup
\brc{\siX{i-1}} \setminus W_i$. It follows that the probability for
$\query\siX{i}$ can be computed by multiplying the probability for
$\query\siX{i-1}$ with $\frac{\priX{i}}{\priX{i-1}} \times
\frac{\inv{\priX{i-1}}}{\prod_{\siX{j} \in W_i}\inv{\priX{j}}}~.$ The
cost of a single update is $O(1)$ amortized because total number
multiplications in all the updates is at most $4n$. (Each site affects
at most $4$ updates.) Finally, notice that we can easily keep track of
the set $W_i$ during our radial sweep, as changes to this set follow
the same radial order. \hakan{Removed "The result"}


\begin{theorem}
    Given a set of $n$ uncertain points in $\Re^2$ under the
    {\unipoint} model, the membership probability of a query point
    $\query$ can be computed in $O(n\log{n})$ time.
\end{theorem}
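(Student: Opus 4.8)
The plan is to follow the reduction already sketched in this section: write $\mprX{\query} = 1 - \Prob{\query \in \VV}$, where $\VV$ is the vertex set of $\CC = \CHX{\SetB \cup \brc{\query}}$, and then decompose the event $\brc{\query \in \VV}$ according to which edge of $\CC$ is its ``witness''. First I would verify carefully the characterization: $\query \in \VV$ if and only if either $\SetB = \emptyset$, or there is exactly one index $i$ with $\siX{i} \in \SetB$ such that no site of $\SetB$ lies strictly to the right of the oriented line spanned by $\vect{\query\siX{i}}$. General position (no three points of $\Sites \cup \brc{\query}$ collinear) is what makes this witness edge unique, and the degenerate case $\cardin{\SetB} = 1$ is absorbed by the convention that $\CC$ is then a two-edge cycle. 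Since these events --- one for $\SetB = \emptyset$ and one for each candidate witness edge --- are pairwise disjoint and jointly exhaust $\brc{\query\in\VV}$, linearity of expectation gives
\[
  \Prob{\query \in \VV} = \prod_{j=1}^n \inv{\priX{j}} \;+\; \sum_{i=1}^n \priX{i} \prod_{\siX{j} \in \GroupX{i}} \inv{\priX{j}},
\]
where $\GroupX{i}$ is the set of sites strictly to the right of $\vect{\query\siX{i}}$.

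Next I would turn the sum over $i$ into something evaluable in $O(n\log n)$ total time. Sort the sites by angle around $\query$; let $\siX{1}, \ldots, \siX{n}$ be this cyclic order (cost $O(n\log n)$). Compute the $i=1$ term directly in $O(n)$ time by scanning all sites and testing which lie to the right of $\vect{\query\siX{1}}$. For $i \ge 2$, rotating the oriented line from direction $\vect{\query\siX{i-1}}$ to $\vect{\query\siX{i}}$ changes the right half-plane in a controlled way: $\siX{i-1}$ drops out of it, while exactly the sites in the open wedge $W_i$ between the two directions enter it, so $\GroupX{i} = (\GroupX{i-1} \cup \brc{\siX{i-1}}) \setminus W_i$. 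Hence the $i$-th product is obtained from the $(i-1)$-st by one division (removing $\inv{\priX{i-1}}$... wait, by one division to remove $\inv{\priX{j}}$ for each $\siX{j}\in W_i$ and one multiplication to insert $\inv{\priX{i-1}}$, together with the ratio $\priX{i}/\priX{i-1}$), so the update costs $O(1 + \cardin{W_i})$ arithmetic operations; the sorted wedge itself is maintained for free since its two bounding directions advance monotonically in the same angular order.

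The running-time claim then rests on a simple amortization: the total update cost is $O\!\bigl(n + \sum_{i=2}^n \cardin{W_i}\bigr)$, and each site $\siX{j}$ lies in $W_i$ for only a bounded number of indices $i$ over a full angular sweep --- $\siX{j}$ enters the right half-plane when the sweeping direction crosses $\vect{\query\siX{j}}$ and leaves when it crosses the opposite direction, so $\siX{j}$ is counted in $O(1)$ of the wedges. Therefore $\sum_i \cardin{W_i} = O(n)$, and the whole computation is $O(n\log n)$. The main thing to be careful about --- and the only real obstacle beyond the amortization bookkeeping --- is that $\priX{j}$ may equal $1$, making $\inv{\priX{j}} = 0$, so ``dividing out'' such a factor is undefined; this is handled by maintaining, alongside the running product of the \emph{nonzero} complementary probabilities, a counter of how many zero factors are currently present, and reading off a product as $0$ exactly when that counter is positive. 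The counter changes by $O(1 + \cardin{W_i})$ per step, so the asymptotics are unaffected. Finally, summing the $n+1$ terms of $\Prob{\query \in \VV}$ takes $O(n)$ time, and $\mprX{\query} = 1 - \Prob{\query\in\VV}$ follows, giving the stated $O(n\log n)$ bound.
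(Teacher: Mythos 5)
Your proposal follows essentially the same route as the paper: the witness-edge decomposition of $\Prob{\query\in\VV}$, the radial sort around $\query$, and the amortized $O(1)$ update of the product via $\GroupX{i} = \GroupX{i-1}\cup\brc{\siX{i-1}}\setminus W_i$, with each site contributing to $O(1)$ wedges. The only additions are cosmetic (your prose momentarily swaps which sites enter and leave the right half-plane, though your formula and update rule are correct) and one genuine improvement the paper glosses over: your zero-factor counter correctly handles sites with $\priX{j}=1$, for which the paper's division by $\inv{\priX{j}}=0$ would be undefined.
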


\subsection{The $d$-dimensional case}

The difficulty in extending the above to higher dimensions is an
appropriate generalization of {witness edges}, which allow us to
implicitly sum over exponentially many outcomes without {overcounting}. Our algorithm requires that all sites, including the
\hakan{Changed this sentence (for now). See my email.} query point $\query$, are in general position, i.e., no $k+1$ points of
$\UPntSet \cup \brc{\query}$ lie on a $(k-1)$-hyperplane when projected into a subset of $k$ coordinates, where $2 \le k \le d$.

Let $\SetB$ be an outcome, $\CC = \CHX{ \SetB \cup \brc{\query}}$ its
convex hull, and $\VV$ the vertices of $\CC$.  Let $\lowd(\SetBA)$
denote the point with the lowest $x_d$-coordinate in $\SetBA$.
Clearly, if $\query$ is $\lowd(\SetBA)$ then $\query \in \VV$;
otherwise, we condition the probability based on which point among
$\SetB$ is $\lowd(\SetBA)$.  Therefore, we can write
\begin{align*}
    \Prob{\Bigl. \query \in \VV}%
    =%
    \Prob{\Bigl. \query \altequiv \lowd(\SetBA)} + \sum_{1 \le i \le
       n}{\Prob{\Bigl. \siX{i} \altequiv \lowd(\SetBA) \;\land\; \query \in
          \VV}}.
\end{align*}
It is easy to compute the first term. We show below how to compute
each term of the summation in $O(n^{d-1})$ time, which gives the
desired bound of $O(n^d)$.

\parpic[r]{
\definecolor{c4c4c9}{RGB}{196,196,201}
\definecolor{a0a0a4}{RGB}{160,160,164}
\definecolor{000000}{RGB}{0,0,0}
\definecolor{c3c3c3}{RGB}{195,195,195}
\begin{tikzpicture}[x=1cm,y=1cm,scale=0.79,every node/.style={scale=0.79}]
\clip (1,0.5) rectangle (6,4.2);
\path [fill,color=c4c4c9,line width=1] (4.68037,2.14606)  --  (3.05903,1.7406)  --  (4.26829,1.08977)  --  cycle;
\node [color=000000,inner sep=0.5pt,above,shift={(0,3pt)}] at (7.05561,2.99741) {$\CC$};
\node [color=000000,inner sep=0.5pt,below,shift={(0,-3pt)}] at (2.53464,1.90118) {$\siX{i}'$};
\node [color=000000,inner sep=0.5pt,below,shift={(0,-3pt)}] at (7.05561,1.43758) {$\CCP$};
\node [color=000000,inner sep=0.5pt,above,shift={(0,3pt)}] at (2.05561,2.99741) {$\CC$};
\node [color=000000,inner sep=0.5pt,below,shift={(0,-3pt)}] at (3.96115,3.05154) {$\f$};
\node [color=000000,inner sep=0.5pt,below,shift={(0,-3pt)}] at (2.05561,1.43758) {$\CCP$};
\path [fill,color=c3c3c3,line width=1] (4.68037,3.30326)  --  (3.05903,2.92653)  --  (4.26829,2.98349)  --  cycle;
\node [color=000000,inner sep=0.5pt,below,shift={(0,-3pt)}] at (3.05903,1.7406) {$\mpt'$};
\node [color=000000,inner sep=0.5pt,below right,shift={(3pt,-3pt)}] at (3.05903,2.92653) {$\mpt$};
\node [color=000000,inner sep=0.5pt,below,shift={(0,-3pt)}] at
(7.53464,1.90118) {$\siX{i}'$};
\node [color=000000,inner sep=0.5pt,above left,shift={(-3pt,3pt)}] at (9.26829,1.08977) {$\mpt'$};
\node [color=000000,inner sep=0.5pt,below,shift={(0,-3pt)}] at (12.839,1.53555) {$\mpt'$};
\node [color=000000,inner sep=0.5pt,above right,shift={(3pt,3pt)}] at (13.8386,1.10418) {$$};
\node [color=000000,inner sep=0.5pt,below,shift={(0,-3pt)}] at (11.9983,1.89834) {$\siX{i}'$};
\node [color=000000,inner sep=0.5pt,above,shift={(0,3pt)}] at (11.9983,3.58425) {$\siX{i}$};
\node [color=000000,inner sep=0.5pt,above,shift={(0,3pt)}] at (12.839,3.05073) {$\mpt$};
\draw[color=a0a0a4,line width=1] (3.05903,2.92653) -- (4.68037,3.30326);
\draw[color=000000,line width=1] (11.1791,2.37338) -- (11.748,0.676271);
\draw[color=000000,line width=1] (1.74802,0.676271) -- (5.54352,0.676271);
\draw[color=000000,line width=1] (1.17914,2.37338) -- (1.74802,0.676271);
\draw[color=000000,line width=1] (11.1791,2.37338) -- (14.8199,2.37338);
\draw[color=000000,line width=1] (11.748,0.676271) -- (15.5435,0.676271);
\draw[color=000000,line width=1] (15.5435,0.676271) -- (14.8199,2.37338);
\draw[color=000000,line width=1] (5.54352,0.676271) -- (4.81987,2.37338);
\draw[color=000000,line width=1] (6.74802,0.676271) -- (10.5435,0.676271);
\draw[color=000000,line width=1] (10.5435,0.676271) -- (9.81987,2.37338);
\draw[color=000000,line width=1] (1.17914,2.37338) -- (4.81987,2.37338);
\draw[color=a0a0a4,line width=1] (8.05903,2.92653) -- (9.68037,3.30326);
\draw[color=000000,line width=1] (6.17914,2.37338) -- (6.74802,0.676271);
\draw[color=a0a0a4,line width=1] (1.80624,4.08849) -- (4.68037,3.30326);
\draw[color=a0a0a4,line width=1] (6.80624,4.08849) -- (9.68037,3.30326);
\draw[color=000000,line width=1] (6.17914,2.37338) -- (9.81987,2.37338);
\draw[color=a0a0a4,line width=1,dotted] (9.26829,2.98349) -- (9.26829,1.08977);
\draw[color=000000,line width=1] (2.78649,3.43984) -- (4.26829,2.98349);
\draw[color=a0a0a4,line width=1,dotted] (11.9983,1.89834) -- (12.839,1.53555);
\draw[color=000000,line width=1] (7.78649,0.916823) -- (9.26829,1.08977);
\draw[->,color=000000,line width=1] (1,-1) -- (6,-1);
\draw[color=000000,line width=1] (9.26829,2.98349) -- (8.05903,2.92653);
\draw[->,color=000000,line width=1] (3.05903,1.7406) -- (3.99269,1.45469);
\draw[color=000000,line width=1] (6.80624,4.08849) -- (9.79896,3.9566);
\draw[color=000000,line width=1] (7.78649,3.43984) -- (9.26829,2.98349);
\draw[color=a0a0a4,line width=1,dotted] (2.53464,1.90118) -- (3.05903,1.7406);
\draw[color=000000,line width=1] (6.80624,2.18721) -- (7.78649,0.916823);
\draw[color=000000,line width=1] (4.26829,2.98349) -- (4.79896,3.9566);
\draw[->,color=000000,line width=1] (-1.1336,-1.90284) -- (-1.1336,-0.902838);
\draw[color=000000,line width=1] (1.80624,4.08849) -- (4.79896,3.9566);
\draw[color=000000,line width=1] (4.26829,2.98349) -- (4.68037,3.30326);
\draw[color=000000,line width=1] (2.78649,3.43984) -- (3.05903,2.92653);
\draw[color=000000,line width=1] (1.80624,2.18721) -- (2.78649,0.916823);
\draw[color=000000,line width=1] (2.78649,3.43984) -- (4.79896,3.9566);
\draw[color=000000,line width=1] (6.80624,4.08849) -- (8.05903,2.92653);
\draw[->,color=000000,line width=1] (-1.1336,-1.90284) -- (-2.1336,-1.90284);
\draw[color=000000,line width=1] (9.26829,2.98349) -- (9.79896,3.9566);
\draw[color=000000,line width=1] (9.68037,3.30326) -- (9.79896,3.9566);
\draw[color=000000,line width=1] (4.68037,3.30326) -- (4.79896,3.9566);
\draw[color=000000,line width=1] (4.26829,2.98349) -- (3.05903,2.92653);
\draw[color=000000,line width=1] (1.80624,4.08849) -- (3.05903,2.92653);
\draw[color=000000,line width=1] (2.78649,0.916823) -- (4.26829,1.08977);
\draw[color=000000,line width=1] (4.79896,1.65271) -- (4.68037,2.14606);
\draw[color=000000,line width=1] (7.78649,3.43984) -- (9.79896,3.9566);
\draw[color=000000,line width=1] (7.78649,3.43984) -- (8.05903,2.92653);
\draw[color=a0a0a4,line width=1,dotted] (7.53464,1.90118) -- (9.26829,1.08977);
\draw[->,color=000000,line width=1] (9.26829,1.08977) -- (10.1114,0.69518);
\draw[color=000000,line width=1] (9.26829,2.98349) -- (9.68037,3.30326);
\draw[color=000000,line width=1] (9.79896,1.65271) -- (9.68037,2.14606);
\draw[->,color=000000,line width=1] (12.839,1.53555) -- (13.8386,1.10418);
\draw[color=a0a0a4,line width=1,dotted] (12.839,3.05073) -- (12.839,1.53555);
\draw[color=a0a0a4,line width=1,dotted] (11.9983,3.58425) -- (11.9983,1.89834);
\draw[color=000000,line width=1] (4.26829,1.08977) -- (4.79896,1.65271);
\draw[color=000000,line width=1] (1.80624,2.18721) -- (4.68037,2.14606);
\draw[color=a0a0a4,line width=1,dotted] (3.05903,2.92653) -- (3.05903,1.7406);
\draw[color=a0a0a4,line width=1,dotted] (7.53464,3.76949) -- (7.53464,1.90118);
\draw[color=a0a0a4,line width=1,dotted] (2.53464,3.76949) -- (2.53464,1.90118);
\path [fill=000000] (9.79896,3.9566) circle (1.5pt);
\path [fill=000000] (3.05903,1.7406) circle (1.5pt);
\path [fill=000000] (9.26829,1.08977) circle (1.5pt);
\path [fill=000000] (4.68037,3.30326) circle (1.5pt);
\path [fill=000000] (12.839,3.05073) circle (1.5pt);
\path [fill=000000] (11.9983,3.58425) circle (1.5pt);
\path [fill=000000] (12.839,1.53555) circle (1.5pt);
\path [fill=000000] (11.9983,1.89834) circle (1.5pt);
\path [fill=000000] (8.05903,2.92653) circle (1.5pt);
\path [fill=000000] (3.05903,2.92653) circle (1.5pt);
\path [fill=000000] (1.80624,4.08849) circle (1.5pt);
\path [fill=000000] (4.79896,3.9566) circle (1.5pt);
\path [fill=000000] (4.26829,1.08977) circle (1.5pt);
\path [fill=000000] (2.53464,3.76949) circle (1.5pt);
\path [fill=000000] (2.53464,1.90118) circle (1.5pt);
\path [fill=000000] (9.26829,2.98349) circle (1.5pt);
\path [fill=000000] (4.79896,1.65271) circle (1.5pt);
\path [fill=000000] (6.80624,2.18721) circle (1.5pt);
\path [fill=000000] (2.78649,0.916823) circle (1.5pt);
\path [fill=000000] (2.78649,3.43984) circle (1.5pt);
\path [fill=000000] (1.80624,2.18721) circle (1.5pt);
\path [fill=000000] (7.78649,3.43984) circle (1.5pt);
\draw[color=000000,line width=1] (1.80624,4.08849) -- (2.78649,3.43984);
\path [fill=000000] (1,-1) circle (1.5pt);
\path [fill=000000] (4.68037,2.14606) circle (1.5pt);
\path [fill=000000] (9.79896,1.65271) circle (1.5pt);
\draw[color=000000,line width=1] (6.80624,4.08849) -- (7.78649,3.43984);
\path [fill=000000] (6,-1) circle (1.5pt);
\path [fill=000000] (9.68037,3.30326) circle (1.5pt);
\path [fill=000000] (7.78649,0.916823) circle (1.5pt);
\path [fill=000000] (9.68037,2.14606) circle (1.5pt);
\path [fill=000000] (7.53464,3.76949) circle (1.5pt);
\path [fill=000000] (4.26829,2.98349) circle (1.5pt);
\path [fill=000000] (6.80624,4.08849) circle (1.5pt);
\path [fill=000000] (7.53464,1.90118) circle (1.5pt);
\node [color=000000,inner sep=0.5pt,above,shift={(0,5pt)},scale=0.81] at (3.86467,1.43312) {$\rray$};
\node [color=000000,inner sep=0.5pt,below right,shift={(3pt,-3pt)}] at (9.26829,2.98349) {$\mpt$};
\node [color=000000,inner sep=0.5pt,right,shift={(3pt,0)}] at (7.53464,3.76949) {$\siX{i}$};
\node [color=000000,inner sep=0.5pt,right,shift={(3pt,0)}] at (2.53464,3.76949) {$\siX{i}$};
\end{tikzpicture}
}
%

Consider an outcome $\SetB$ with $\siX{i} \in \SetB$. Let $\SetB',
\siX{i}'$ and $\query'$ denote the projections of $\SetB$, $\siX{i}$
and $\query$ respectively on the hyperplane $x_d = 0$, which we
identify with $\Re^{d-1}$. Let us define $\CCP = \CHX{\SetBAP} \subset
\Re^{d-1}$, and let $\VVP$ be the vertices of $\CCP$.

Let $\rray$ denote the open ray emanating from $\query'$ in the
direction of the vector $\vect{\siX{i}' \query'}$ (that is, this ray
is moving ``away'' from $\siX{i}'$).  A facet $\f$ of $\CC$ is a
\emphi{$\siX{i}$-escaping} facet for $\query$, if $\query$ is a vertex
of $\f$ and the projection of $\f$ on $\Re^{d-1}$ intersects
$\rray$. See the figure on the right. The following lemma is key to
our algorithm. The points of $\CC$ projected into $\partial \CCP$ form
the \emphi{silhouette} of $\CC$.



\remove{%
\begin{figure}[t]
    \centering 
\definecolor{000000}{RGB}{0,0,0}
\definecolor{585858}{RGB}{88,88,88}
\definecolor{c3c3c3}{RGB}{195,195,195}
\definecolor{585858}{RGB}{88,88,88}
\definecolor{808080}{RGB}{128,128,128}
\begin{tikzpicture}[x=1cm,y=1cm]
\clip (1,0) rectangle (6,4.7);
\path [fill,color=c3c3c3,line width=1] (5.31943,1.63805)  --  (1.80931,2.56302)  --  (1.80931,4.01943)  --  (5.31943,4.01943)  --  cycle;
\node [color=000000,inner sep=0.5pt,below,shift={(0,-3pt)}] at (3.28219,0.655841) {$\mpt'$};
\path [fill,color=c3c3c3,line width=1] (5.31943,2.97328)  --  (5.31943,0.655841)  --  (1.80931,0.655841)  --  (1.80931,1.59769)  --  cycle;
\node [color=000000,inner sep=0.5pt,below,shift={(0,-3pt)}] at (5.31943,0.655841) {$\rray$};
\node [color=000000,inner sep=0.5pt,above,shift={(0,3pt)}] at (1.80931,4.01943) {$d\mbox{th axis}$};
\path [fill,color=808080,line width=1] (5.31943,2.97328)  --  (3.28219,2.1749)  --  (5.31943,1.63805)  --  cycle;
\draw[->,color=000000,line width=1] (-1.1336,-1.90284) -- (-1.1336,-0.902838);
\draw[->,color=000000,line width=1] (-1.1336,-1.90284) -- (-2.1336,-1.90284);
\draw[->,color=000000,line width=1] (1,-1) -- (6,-1);
\draw[color=000000,line width=1,dotted] (5.31943,2.97328) -- (3.28219,2.1749);
\draw[color=000000,line width=1,dotted] (1.80931,1.59769) -- (3.28219,2.1749);
\draw[color=000000,line width=1,dotted] (1.80931,2.56302) -- (3.28219,2.1749);
\draw[->,color=000000,line width=1] (1.80931,0.655841) -- (5.31943,0.655841);
\draw[->,color=000000,line width=1] (1.80931,0.655841) -- (1.80931,4.01943);
\draw[color=000000,line width=1,dotted] (5.31943,1.63805) -- (3.28219,2.1749);
\draw[->,color=000000,line width=2] (3.28219,0.655841) -- (5.31943,0.655841);
\draw[color=585858,line width=3] (3.28219,2.1749) -- (4.56037,1.83808);
\draw[color=585858,line width=3] (4.54583,2.67011) -- (3.28219,2.1749);
\path [fill=000000] (1,-1) circle (1.5pt);
\path [fill=000000] (6,-1) circle (1.5pt);
\path [fill=000000] (3.28219,2.1749) circle (2pt);
\path [fill=000000] (3.28219,0.655841) circle (2pt);
\node [color=000000,inner sep=0.5pt,left,shift={(-6pt,0)}] at (5.31943,2.30566) {$\CC$};
\node [color=000000,inner sep=0.5pt,below,shift={(0,-3pt)}] at (3.28219,2.1749) {$\mpt$};
\node [color=000000,inner sep=0.5pt,below,shift={(0,-3pt)}] at (3.92128,2.00649) {$\f_1$};
\node [color=000000,inner sep=0.5pt,above,shift={(0,3pt)}] at (3.91401,2.4225) {$\f_2$};
\node [color=000000,inner sep=0.5pt,above,shift={(0,3pt)}] at (3.28219,0.867205) {$\HYP_2$};
\node [color=000000,inner sep=0.5pt,above,shift={(0,3pt)}] at (3.28219,3.06963) {$\HYP_1$};
\end{tikzpicture}

    \caption{The cross-section of the space on the plane defined by
       the $d$\th coordinate axis and the line supporting
       $\protect\rray$.}
    \figlab{plot}
\end{figure}%
}

\begin{lemma}
    \lemlab{projection}%
    \begin{inparaenum}[(A)]
        \item \hakan{Changed this a little to make probability decomposition more transparent. Please, check. Original was ``if $\query' \in \VVP$ then $\query$ is a (silhouette) vertex of $\CC$.} If $\query' \in \VVP$ then $\query$ is a silhouette vertex of $\CC$ and vice versa.
        
        \item If $\siX{i} \in \SetB$ then $\query$ has at most one
        $\siX{i}$-escaping facet on $\CC$.

        \item The point $\query$ is a non-silhouette vertex of the
        convex-hull $\CC$ if and only if $\query$ has a (single)
        $\siX{i}$-escaping facet on $\CC$.
    \end{inparaenum}
\end{lemma}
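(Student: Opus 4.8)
The plan is to study the vertical projection $\pi\colon\Re^d\to\Re^{d-1}$ that drops the $x_d$-coordinate, applied to $\CC$ and its faces, and to cash in general position to kill degeneracies. Two consequences of general position are recorded first: (i) $\CC$ is simplicial, and (ii) $\pi$ restricted to any face of $\CC$ is injective — equivalently, $\CC$ has no positive-dimensional ``vertical'' face (a face whose affine hull contains $e_d$), since such a $j$-face would force $j+1$ sites of $\UPntSet\cup\brc{\query}$, projected into $d-1$ coordinates, onto a $(d-2)$-flat, which general position forbids. In particular $\CC$ has no vertical edge, and every facet $\f$ of $\CC$ projects to a $(d-1)$-simplex $\pi(\f)\subseteq\CCP$ with $\query'$ as a vertex whenever $\query$ is a vertex of $\f$. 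One may also assume $\query\in\VV$ throughout: if $\query$ is a vertex of no facet then it has no escaping facet and (B) is trivial, while (C) reads ``false iff false''.

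For part (A) I would invoke the standard description of polytope projections: the faces of $\CCP$ are exactly the images of the faces of $\CC$ exposed by hyperplanes containing $e_d$, and the fiber of $\CC$ over a relative-interior point of a face $G'$ of $\CCP$ is a face $G$ of $\CC$ with $\pi(G)=G'$; by (ii) this $\pi|_G$ is an affine isomorphism $G\cong G'$. If $\query'\in\VVP$, the fiber over $\query'$ is a face of $\CC$ isomorphic to the vertex $\query'$, hence a single point; since $\query\in\CC$ projects to $\query'$, that point is $\query$, a vertex of $\CC$ lying over $\partial\CCP$, i.e.\ a silhouette vertex. Conversely, a silhouette vertex $\query$ lies over the relative interior of some face $G'$ of $\CCP$ and is a vertex of the corresponding $G\cong G'$, so $\query'$ is simultaneously a vertex of $G'$ and a relative-interior point of $G'$, forcing $\dim G'=0$, i.e.\ $\query'\in\VVP$.

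The rest rests on one local computation. Fix $\siX{i}\in\SetB$, so $\siX{i}'\ne\query'$ and $\rray=\{\query'+t(\query'-\siX{i}') : t>0\}$ is well-defined. For any facet $\f\ni\query$ of $\CC$, $\pi(\f)$ is convex and contains $\query'$, so $\rray\cap\pi(\f)\ne\emptyset$ iff the direction $\query'-\siX{i}'$ lies in the tangent (conical) cone of $\pi(\f)$ at $\query'$ (and then the intersection is an initial sub-segment of $\rray$). Now split on $\query$. If $\query$ is a silhouette vertex, then by (A) $\query'\in\VVP$, so the tangent cone $T'$ of $\CCP$ at $\query'$ is pointed; as $\siX{i}'\in\CCP$ we get $\siX{i}'-\query'\in T'$, hence $\query'-\siX{i}'\notin T'$, so $\rray$ leaves $\CCP$ at once and, by convexity, never returns; since $\pi(\f)\subseteq\CCP$ for every facet, $\query$ has \emph{no} $\siX{i}$-escaping facet. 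If $\query$ is a non-silhouette vertex, then $\query'\in\Int\CCP$ and $\query$ lies on the upper or lower hull of $\CC$ (say the upper), which near $\query$ is the graph of a concave function over a neighborhood $N\subseteq\Int\CCP$ of $\query'$; hence $\pi$ carries the union of facets at $\query$ homeomorphically onto $N$, so those projections cover $N$ with disjoint interiors and their tangent cones at $\query'$ tile $\Re^{d-1}$ with disjoint interiors. The nonzero direction $\query'-\siX{i}'$ avoids every tangent-cone boundary (an incidence would put $\query$, $\siX{i}$ and $d-2$ further sites' projections on a common $(d-2)$-flat, excluded by general position for $d\ge3$; immediate for $d=2$), so it lies in the interior of exactly one such cone, giving exactly one $\siX{i}$-escaping facet. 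Combining: (B) is the statement that this count is $0$ or $1$; (C) says ``$\query$ has an $\siX{i}$-escaping facet'' $\iff$ ``non-silhouette vertex'', with uniqueness from (B).

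The main obstacle I expect is the non-silhouette local analysis: rigorously establishing the ``upper/lower-hull graph'' structure so that the facets at $\query$ project with pairwise-disjoint interiors and their tangent cones exhaust all of $\Re^{d-1}$, and marshalling the exact general-position incidences needed to keep $\query'-\siX{i}'$ off all cone boundaries. Part (A) is routine polytope-projection theory, but must be handled with care, because a priori a silhouette vertex could project into the relative interior of a higher-dimensional face of $\CCP$ — and ruling that out is precisely what general position (no positive-dimensional vertical faces) buys us.
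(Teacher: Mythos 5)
Your proof is correct and follows essentially the same route as the paper's: project along the $x_d$-axis, split on whether $\query$ is a silhouette vertex, and in the non-silhouette case argue that the ``tent'' of facets at $\query$ projects onto a neighborhood of $\query'$ so that exactly one facet catches the ray $\rray$. The paper's own argument is only a sketch (part (A) is dismissed ``by definition'' and the tent/folding claim is asserted without the tangent-cone bookkeeping); your version supplies the polytope-projection facts and the precise general-position incidences that make it rigorous.
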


\begin{proof}
    (A) \hakan{This proof is vague to me. We need to improve it at some point if we want to go for a journal.} By definition.

    (B) If $\query$ has a $\siX{i}$-escaping facet then it is a vertex
    of the convex-hull $\CC$. Consider the union of facets adjacent to
    $\query$, and observe that the projection of this ``tent'' can
    fold over itself in the projection only if $\query$ is on the
    silhouette. Specifically, if $\query$ is not on the silhouette
    then the claim immediately holds. 

    Otherwise, $\query$ is on the silhouette then the open ray 
    $\rray$ does not intersect $\CCP$, and there are no 
    $\siX{i}$-escaping facets.

    (C) Follows immediately from (B), by observing that in this case,
    the projected ``tent'', surrounds $\query'$, and as such one of
    the facets must be an escaping facets for $\siX{i}$.
\end{proof}

Given a subset of sites $\Sites_{\alpha} \subseteq \Sites \setminus
\brc{\siX{i}}$ of size $(d-1)$, define $\f(\SFACE)$ to be the
$(d-1)$-dimensional simplex $\CHX{\Sites_{\alpha} \cup
   \brc{\query}}$. Since $\siX{i} \altequiv \lowd(\SetBA)$ implies
$\siX{i} \in \SetB$, we can use \lemref{projection} to decompose the
$i$\th term as follows:
\begin{align*}
    &%
    \Prob{\Bigl. \siX{i} \altequiv \lowd(\SetBA) \;\land\; \query \in \VV}%
    =%
    \Prob{\Bigl. \siX{i} \altequiv \lowd(\SetBA) \;\land\; \query' \in
       \VVP}%
    \\& \qquad%
    + \sum_{\substack{ \SFACE \subseteq {\Sites \setminus \brc{\siX{i}}} \\
          |\SFACE|=(d-1) \\ \f(\SFACE)%
          \text{ is }\siX{i}\text{-escaping for $q$}}}%
    \Prob{\Bigl. \siX{i} \altequiv \lowd(\SetBA) \;\land\; \f(\SFACE)
       \text{ is a facet of } \CC}.
\end{align*}
The first term is an instance of the same problem in $(d-1)$
dimensions (for the point $\query'$ and the projection of $\Sites$),
and thus is computed recursively. For the second term, we compute the
probability that $\f(\SFACE)$ is a facet of $\CC$ as follows. Let
$\GroupX{1} \subseteq \Sites$ be the subset of sites which are on the
other side of the hyperplane supporting $\f(\SFACE)$ with respect to
$\siX{i}$. Let $\GroupX{2} \subseteq \Sites$ be the subset of sites
that are below $\siX{i}$ along the $x_d$-axis. Clearly, $\f(\SFACE)$
is a facet of $\CC$ (and $\siX{i} \altequiv \lowd(\SetBA)$) if and
only if all points in $\SFACE$ and $\siX{i}$ exist in $\SetB$, and all
points in $\GroupX{1} \cup \GroupX{2}$ are absent from $\SetB$. The
corresponding probability can be written as
\begin{flalign*}
    \priX{i} \times \prod_{\siX{j} \in \SFACE}{\priX{j}} \times
    \prod_{ \siX{j} \,\in\, \GroupX{1} \cup \GroupX{2}
    }{\inv{\priX{j}}} \,.
\end{flalign*}
This formula is valid only if $\SFACE \cap \GroupX{2} = \emptyset$ and
$\siX{i}$ has a lower $x_d$-coordinate than $\query$; otherwise we set
the probability to zero.  This expression can be computed in linear
time, and the whole summation term can be computed in $O(n^d)$
time. Then, by induction, the computation of the $i$\th term takes
$O(n^d)$ time. 
\hakan{Changed the next few sentences in an attempt to address reviewer comment. Feel free to roll back if you don't like.} Notice that the base case of our induction requires computing the probability $\Probsmall{\Bigl. \siX{i} \altequiv \lowd(\SetBA) \;\land\; \query^{(d-2)} \in \VV^{(d-2)}}$ (where $^{(d-2)}$ indicates a projection to $\Re^2$). Computing this probability is essentially a two-dimensional membership probability problem on $\query$ and $P$, but is conditioned on the existence of $\siX{i}$ and the non-existence of all sites below $\siX{i}$ along $d$th axis. Our two dimensional algorithm can be easily adapted to solve this variation in $O(n\log{n})$ time as well.
%
(Briefly, we apply the same algorithm but we ignore all points that are below $\siX{i}$ . We later adjust the 
Finally, we can improve the computation time for the $i$\th term to
$O(n^{d-1})$ by considering the facets $\f(\SFACE)$ in radial order.
\InNotESAVer{The details can be found in \apndref{radial:D:D}.}%
\SeeFullVer{}

\mparagraph{Remark.} %
The degeneracy of the input is easy to handle in two dimensions, but
creates some technical difficulties in higher dimensions that we are
currently investigating.

\begin{theorem}
    Let $\UPntSet$ be an uncertain set of $n$ points in the
    {\unipoint} model in $\Re^d$ and $\query$ be a point. If the input
    sites and $q$ are in general position, then one can compute the
    membership probability of $\query$ in $O(n^d)$ time, using linear
    space.
\end{theorem}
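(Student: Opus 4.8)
The plan is to follow the same strategy used in the planar case. First I would rewrite $\mprX{\query} = 1 - \Prob{\query \in \VV}$, where for an outcome $\SetB$ we set $\CC = \CHX{\SetBA}$ and let $\VV$ be its vertex set; this identity holds because $\query$ lies outside $\CHX{\SetB}$ exactly when it is a vertex of $\CHX{\SetBA}$. I would then condition on which point of $\SetBA$ has the smallest $x_d$-coordinate, i.e.\ which point equals $\lowd(\SetBA)$. The event $\query \altequiv \lowd(\SetBA)$ already forces $\query \in \VV$ and has an easily computed probability, so the work reduces to evaluating, for each site $\siX{i}$, the term $\Prob{\siX{i}\altequiv\lowd(\SetBA) \land \query \in \VV}$. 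The theorem follows once each such term is computed in $O(n^{d-1})$ time together with a single recursive $(d-1)$-dimensional call, since then $T(n,d) = O(n^{d}) + n\,T(n,d-1)$, which unwinds to $O(n^{d})$ for constant $d$ with the $O(n\log n)$ planar algorithm as the base case.

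The core of the argument is the case analysis in \lemref{projection}, applied after projecting $\SetB,\siX{i},\query$ onto the hyperplane $x_d=0$ to get $\SetB',\siX{i}',\query'$ and $\CCP=\CHX{\SetBAP}$ with vertex set $\VVP$ (note $\siX{i}\altequiv\lowd(\SetBA)$ implies $\siX{i}\in\SetB$, so the lemma applies). Part~(A) identifies the silhouette vertices ($\query$ is one iff $\query'\in\VVP$), and parts~(B)--(C) say that a non-silhouette $\query$ is a vertex of $\CC$ iff it has a \emph{unique} $\siX{i}$-escaping facet. Hence the event $\siX{i}\altequiv\lowd(\SetBA)\land\query\in\VV$ decomposes, disjointly and exhaustively, into: (i) $\siX{i}\altequiv\lowd(\SetBA)\land\query'\in\VVP$; and (ii) a disjoint union, over $(d-1)$-element $\SFACE\subseteq\Sites\setminus\{\siX{i}\}$ whose simplex $\f(\SFACE)$ can be $\siX{i}$-escaping for $\query$, of the events ``$\siX{i}\altequiv\lowd(\SetBA)$ and $\f(\SFACE)$ is a facet of $\CC$''. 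The uniqueness in part~(B) is exactly what rules out overcounting, so that is the step I would be most careful about. Term~(i) is a $(d-1)$-dimensional instance of the same problem (for $\query'$ and the projected sites, conditioned on $\siX{i}$ present and every site below $\siX{i}$ absent), evaluated recursively. In each summand of~(ii), $\f(\SFACE)$ is a facet of $\CC$ with $\siX{i}=\lowd(\SetBA)$ iff $\siX{i}$ and all of $\SFACE$ are present and every site lying either on the far side of the hyperplane spanning $\f(\SFACE)$ or below $\siX{i}$ along $x_d$ is absent; its probability is the single product $\priX{i}\times\prod_{\siX{j}\in\SFACE}\priX{j}\times\prod_{\siX{j}\in\GroupX{1}\cup\GroupX{2}}\inv{\priX{j}}$ (zero unless $\SFACE$ lies above $\siX{i}$ and $\siX{i}$ below $\query$), computable in $O(n)$ time.

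Summing~(ii) term by term over all $O(n^{d-1})$ choices of $\SFACE$ would cost $O(n^{d})$ per site and $O(n^{d+1})$ overall, which is too slow, so the essential speedup is to enumerate the candidate escaping simplices in radial order about the line through $\query$ parallel to the $x_d$-axis, maintaining the sets $\GroupX{1}$ and $\GroupX{2}$ incrementally so that consecutive probabilities differ by $O(1)$ multiplications --- the higher-dimensional analogue of the planar wedge sets $W_i$. This makes term~(ii) cost $O(n^{d-1})$ per site, matching the recursion above. Finally I would verify the bookkeeping: the recursion bottoms out at a planar membership problem conditioned on a constant number of fixed ``low'' sites being present and all sites below them absent, which the $O(n\log n)$ planar algorithm solves after discarding the absent sites; and the computation keeps only $O(n)$ working data at each of the $O(1)$ recursion levels, hence $O(n)$ space in total. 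The main obstacle is making \lemref{projection}(B)--(C) --- the bijection between non-silhouette vertices and escaping facets --- and the radial incremental maintenance fully rigorous under the stated general-position hypothesis; as the remark warns, degeneracies destroy the uniqueness and would need separate treatment.
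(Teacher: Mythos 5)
Your proposal follows the paper's proof essentially verbatim: the same conditioning on $\lowd(\SetBA)$, the same use of \lemref{projection} to split each term into a recursive $(d-1)$-dimensional silhouette instance plus a sum over $\siX{i}$-escaping facets with the identical product formula over $\GroupX{1}\cup\GroupX{2}$, and the same radial-sweep amortization to bring each term down to $O(n^{d-1})$ (the paper fixes a $(d-2)$-ridge and sweeps in its two-dimensional orthogonal complement, which is the concrete form of the incremental maintenance you sketch). Your recursion accounting, conditioned planar base case, and linear-space observation all match the paper's argument, so the plan is correct and not a different route.
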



\mparagraph{Extension to the {\multipoint} model.} %
The algorithm extends to the {\multipoint} model easily by modifying
the computation of the probability for an edge or facet.
\InNotESAVer{Deferring the details to \apndref{memmulti}, we conclude
   the following.}  \SeeFullVer{}

\begin{theorem}
    Given an uncertain set $\UPntSet$ of $n$ points in the
    {\multipoint} model in $\Re^d$ and a point $\query\in \Re^d$, we can
    compute the membership probability of $\query$ in $O(n\log{n})$ time
    for $d=2$, and in $O(n^d)$ time for $d \ge 3$ if input sites and
    $q$ are in general position.
\end{theorem}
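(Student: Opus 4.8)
We only sketch the modifications to the {\unipoint} algorithms of \secref{mem}. Both of them---the radial sweep around $\query$ for $d=2$, and the recursion on $\lowd(\SetBA)$ together with \lemref{projection} for $d\ge3$---carry over to the {\multipoint} model essentially verbatim; the only change is in the closed forms of two probabilities: that a given edge $\query\sijY{k}{l}$ is the witness edge (when $d=2$), and that a given $\siX{i}$-escaping $(d-1)$-simplex $\f(\SFACE)$ is a facet of $\CC$ with $\siX{i}=\lowd(\SetBA)$ (when $d\ge3$). Neither algorithm's combinatorial skeleton depends on \emph{how} the sites realize: given an oriented ray (respectively a supporting hyperplane) it only needs the probability that no realized site lands in a prescribed region, and the probability that a prescribed constant-size set of sites is realized. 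So it suffices to re-derive those quantities under the {\multipoint} semantics and to verify that they remain maintainable in $O(1)$ amortized time along the respective sweeps.

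In the {\multipoint} model a site is some $\sijY{k}{l}\in\SiX{k}$; distinct groups $\SiX{1},\dots,\SiX{m}$ realize independently, while within one $\SiX{k}$ the events ``$\SiX{k}$ realizes at $\sijY{k}{l}$'' are mutually exclusive. Consequently the probability that no realized site lies in a region $R$ is a product over groups rather than over sites, $\prod_{k}\bigl(1-\sum_{\sijY{k}{l}\in R}\prijY{k}{l}\bigr)$, and a group that is already \emph{pinned}---because one of the edge/facet-defining sites lies in it---contributes no such factor, while two pinned sites in a common group force the probability to $0$. Thus for $d=2$ we get $\Prob{\SetB=\emptyset}=\prod_{k}\bigl(1-\sum_l\prijY{k}{l}\bigr)$ and the probability that $\query\sijY{k}{l}$ is the witness edge is
\[
   \prijY{k}{l}\cdot\prod_{k'\ne k}\bigl(1-s_{k'}\bigr),
\]
where $s_{k'}$ denotes the total probability of the sites of $\SiX{k'}$ that lie to the right of $\vect{\query\sijY{k}{l}}$. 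For $d\ge3$ the per-facet probability $\Prob{\sijY{k}{l}=\lowd(\SetBA)\ \land\ \f(\SFACE)\text{ is a facet of }\CC}$ equals $\prijY{k}{l}$ times $\prod_{\sijY{k'}{l'}\in\SFACE}\prijY{k'}{l'}$ times $\prod_{k''}\bigl(1-\sum_{\sijY{k''}{l''}\in\GroupX{1}\cup\GroupX{2}}\prijY{k''}{l''}\bigr)$, where $\GroupX{1},\GroupX{2}$ are as in \secref{mem} and the last product ranges over groups not pinned by $\SFACE\cup\{\sijY{k}{l}\}$; it is $0$ when $\SFACE\cup\{\sijY{k}{l}\}$ contains two sites of one group (as well as when $\SFACE\cap\GroupX{2}\ne\emptyset$ or $\sijY{k}{l}$ is not below $\query$ along $x_d$). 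More generally, the conditioning accumulated by the recursion---a collection of sites forced present and another forced absent---is handled uniformly by this principle: each forced-present site pins its group (and two such in one group give $0$), and each forced-absent site is dropped from the sweep once its probability mass has been subtracted, as a fixed offset, from its group's absence factor.

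To preserve the running times I maintain, during the $d=2$ sweep, for each group $\SiX{k'}$ the sum $s_{k'}$ of probabilities of its sites currently to the right of the ray (hence its factor $1-s_{k'}$), together with the running product of all these factors. As the candidate direction turns once around $\query$ each site enters and leaves the ``right of the ray'' region once, so each of these $O(n)$ events updates one $s_{k'}$ and the product in $O(1)$, and moving from one candidate to the next changes at most which single factor is excluded from the product; all of this is $O(1)$ amortized, for $O(n\log n)$ total after the initial angular sort. The one real nuisance is that a factor $1-s_{k'}$ may vanish, so ``the product of all factors but one'' must be kept by the standard device of storing the number of zero factors together with the product of the nonzero ones. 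The same bookkeeping drives the radial-order sweep over the simplices $\f(\SFACE)$ that reduces the $i$-th term in dimension $d$ to $O(n^{d-1})$, hence $O(n^d)$ overall, since $\GroupX{1}\cup\GroupX{2}$ changes incrementally; and the two-dimensional base case of the recursion---planar membership of $\query$ among (the projections of) the sites of $P$ under the accumulated conditioning---is solved by running the $d=2$ sweep only on the sites the conditioning leaves, with the pinned groups' fixed factors multiplied in and the forced-absent masses offset from the relevant groups' absence factors, again in $O(n\log n)$.

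The only genuine departure from \secref{mem} is exactly this group bookkeeping: identifying which groups are pinned (and which configurations are therefore impossible) and arranging the maintained data---per-group sums, the global product, the zero count---so that every incremental step of either sweep stays $O(1)$ amortized. All the geometry---$\lowd(\SetBA)$, \lemref{projection}, the radial orders, the induction on dimension, and the general-position hypothesis---is inherited unchanged, so once these probability formulas and their incremental maintenance are in place the bounds $O(n\log n)$ for $d=2$ and $O(n^d)$ for $d\ge3$ follow exactly as in the {\unipoint} case.
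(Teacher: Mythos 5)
Your proposal is correct and follows essentially the same route as the paper's own treatment (Appendix C): the same decomposition into witness-edge/escaping-facet events, the same per-group absence factors $\prod_{k'}\bigl(1-\sum_{\text{sites in region}}\prijY{k'}{l'}\bigr)$ with pinned groups excluded and impossible configurations set to zero, and the same incremental maintenance of these factors along the radial sweeps to get $O(1)$ amortized updates. Your explicit handling of vanishing factors via a zero-count is a detail the paper glosses over, but it is a refinement of the same argument, not a different approach.
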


%
%

\section{Membership Queries}
\seclab{mem_queries}

We describe two algorithms -- one deterministic and one Monte Carlo --
for preprocessing a set of uncertain points for efficient
membership-probability queries.

\mparagraph{Probability map.} %
The \emphi{probability map} $\mathbb{M}(\UPntSet)$ is the subdivision
of $\Re^d$ into maximal connected regions so that $\mprX{q}$ is the
same for all query points $q$ in a region. The following lemma gives a
tight bound on the size of $\mathbb{M}(\UPntSet)$.

\begin{lemma}
    \lemlab{pvd}%
    The worst-case complexity of the probability map of a set of
    uncertain points in $\Re^d$ is $\Theta(n^{d^2})$, under both the
    unipoint and the multipoint model, where $n$ is the total number
    of sites in the input.
\end{lemma}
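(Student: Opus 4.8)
The plan is to prove the two bounds separately; the upper bound $O(n^{d^2})$ is a short order-type argument, and the lower bound $\Omega(n^{d^2})$ requires an explicit construction, which is where the real work lies.

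For the upper bound, the key point is that $\mprX{\query} = \sum_{\SetB \subseteq \Sites,\ \query \in \CHX{\SetB}} \prA$ depends only on the \emph{order type} of $\Sites \cup \brc{\query}$: whether $\query \in \CHX{\SetB}$ is a Boolean combination of the orientation predicates of the $(d{+}1)$-subsets of $\SetB \cup \brc{\query}$ (a Carath\'eodory/Radon criterion), while the weight $\prA$ depends only on which sites lie in $\SetB$, not on their coordinates. The orientation predicates involving sites only are fixed, so $\mprX{\cdot}$ can change only when $\query$ crosses one of the $\binom nd = O(n^d)$ hyperplanes spanned by $d$-subsets of $\Sites$; let $\Arrangement$ denote the arrangement of these hyperplanes. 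Then $\mprX{\cdot}$ is constant on each relatively open face of $\Arrangement$, so every region of $\mathbb{M}(\UPntSet)$ is a union of faces of $\Arrangement$, and $\mathbb{M}(\UPntSet)$ is no more complex than $\Arrangement$. An arrangement of $m$ hyperplanes in $\Re^d$ has $O(m^d)$ faces in total, giving complexity $O((n^d)^d) = O(n^{d^2})$.

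For the lower bound it suffices to exhibit one instance for which $\mprX{\cdot}$ takes $\Omega(n^{d^2})$ distinct values, since any subdivision of $\Re^d$ into constant-$\mprX{\cdot}$ regions then has $\Omega(n^{d^2})$ regions and hence complexity $\Omega(n^{d^2})$. The plan is: (i) place the $n$ sites so that some region of $\Re^d$ is cut by $\Arrangement$ into $\Theta(n^{d^2})$ cells that induce \emph{pairwise distinct} outcome sets $\brc{\SetB \st \query \in \CHX{\SetB}}$ — for $d = 2$ one puts the sites on a short convex arc so that, viewed from the far side, the $\binom n2$ connecting lines sweep a common window in which every crossing flips membership of some $2$-point simplex outcome; (ii) choose the existence probabilities $\priX{1},\dots,\priX{n}$ generically. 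Since the atom weights $\brc{\prA}_{\SetB \subseteq \Sites}$ are linearly independent as polynomials in the $\priX{i}$ (evaluating at $\priX{i}\in\brc{0,1}$ yields the identity matrix), two cells with different outcome sets give different polynomials for $\mprX{\cdot}$, hence different values under a generic probability assignment; as there are only finitely many cell pairs a single assignment works, and we obtain $\Theta(n^{d^2})$ distinct values.

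The heart of the matter — and the main obstacle — is step (i) for $d \ge 3$: arranging the $n$ sites so that the $\binom nd$ point-spanned hyperplanes (an arrangement that is never fully generic, since $\binom{n-1}{d-1}$ of them pass through each site) still cut out $\Theta(n^{d^2})$ cells with pairwise distinct outcome sets inside a common region. A natural route is to split the sites into $d$ clusters, each contributing $\Theta(n^d)$ hyperplanes, positioned so the $d$ families cross transversally inside one window and every facet crossing toggles membership of some $d$-point simplex outcome; pushing this through is the technical core. The two-dimensional construction, and the genericity argument for the probabilities, are by comparison routine.
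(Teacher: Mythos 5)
Your upper bound is the same as the paper's (arrangement of the $O(n^d)$ hyperplanes spanned by $d$-tuples of sites refines the probability map, hence $O(n^{d^2})$ faces), just with the order-type justification spelled out more fully; that part is fine.

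The lower bound is where there is a genuine gap, and you have correctly located it yourself: step (i) is never carried out. For $d\ge 3$ you explicitly leave the construction as ``the technical core,'' so the claimed $\Omega(n^{d^2})$ bound is not proved. But even for $d=2$ your plan asks for more than you establish: because you lower-bound the number of regions of $\mathbb{M}(\UPntSet)$ by the number of \emph{distinct values} of $\mprX{\cdot}$, you need $\Omega(n^4)$ cells whose outcome sets $\brc{\SetB \st \query\in\CHX{\SetB}}$ are \emph{pairwise} distinct. That is not automatic from having $\Theta(n^4)$ cells in a window: distinct cells of the arrangement can carry identical outcome sets (all cells outside $\CHX{\Sites}$ do, for instance, and membership in $\CHX{\SetB}$ is governed by simplices, not by full hyperplanes, so a sign vector of the line arrangement does not determine the outcome set and vice versa). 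Your genericity/linear-independence argument for the probabilities is sound, but it only converts ``pairwise distinct outcome sets'' into ``pairwise distinct values''; the combinatorial input it needs is exactly the part that is missing.

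The paper's construction sidesteps both difficulties with a weaker, purely local requirement. Take the sites to be the vertices of a regular $n$-gon, each present with the same probability $\w<1$, and consider the arrangement $\Arrangement$ of the $\binom{n}{2}$ segments $p_ip_j$, which has $\Theta(n^4)$ faces. For relatively open faces $f_1\subset\partial f_2$ one always has $\mprX{f_1}\ge\mprX{f_2}$ (convex hulls are closed, so the outcome set can only grow in the limit), and the inequality is strict because the outcome $\SetB=\brc{p_i,p_j}$ whose segment carries $f_1$ but misses $f_2$ has positive weight $\w^2(1-\w)^{n-2}$. Since any connected union of two or more faces must contain an incident pair, every face of $\Arrangement$ is its own region of $\mathbb{M}(\UPntSet)$. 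This needs only \emph{adjacent} faces to differ, requires no genericity of the probabilities, and extends to higher dimensions by the same monotonicity. If you want to salvage your route, you should either prove the pairwise-distinctness claim for a concrete configuration in every dimension, or switch to an adjacency-based distinctness argument of this kind.
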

\begin{proof}
    We prove the result for the unipoint model, as the extension to
    the multipoint is straightforward.  For the upper bound, consider
    the set $\HyperplaneSet$ of $O(n^d)$ hyperplanes formed by all
    $d$-tuples of points in $\UPntSet$. In the arrangement
    $\Arrangement(\HyperplaneSet)$ formed by these planes, each (open)
    cell has the same value of $\mprX{q}$. This arrangement, which is
    a \hakan{Is refinement the right word here?} refinement of $\mathbb{M}(\UPntSet)$, has size $O((n^d)^d) =
    O(n^{d^2})$, establishing the upper bound.
    
    \parpic[r]{\includegraphics[scale=0.34]%
       {\si{figs/r__8vert__p0_15__tries_2000}}}%
    
    For the lower bound, consider the problem in two dimensions;
    extension to higher dimensions is straightforward. We choose the
    sites to be the vertices $p_1, \ldots, p_n$ of a regular $n$-gon,
    where each site exists with probability $\w$, $0 < \w < 1$.  See
    the figure on the right.  Consider the arrangement $\Arrangement$
    formed by the line segments $p_ip_j$, $1 \leq i < j \leq n$, and
    treat each face as relatively open. If $\mprX{f}$ denotes the
    membership probability for a face $f$ of $\Arrangement$, then for
    any two faces $f_1$ and $f_2$ of $\Arrangement$, where $f_1$
    bounds $f_2$ (i.e., $f_{1} \subset \partial f_{2}$), we have $\mprX{f_1} \geq \mprX{f_2}$, and
    $\mprX{f_1} > \mprX{f_2}$ if $\w <1$.  Thus, the size of the
    arrangement $\Arrangement$ is also a lower bound on the complexity
    of $\mathbb{M}(\UPntSet)$. This proves that the worst-case
    complexity of $\mathbb{M}(\UPntSet)$ in $\Re^d$ is
    $\Theta(n^{d^2})$.
\end{proof}

We can preprocess this arrangement into a point-location data
structure, giving us the following result for $d=2$.

\begin{theorem}
    Let $\UPntSet$ be a set of uncertain points in $\Re^2$, with a
    total of $n$ sites. $\UPntSet$ can be preprocessed in $O(n^4)$
    time into a data structure of size $O(n^4)$ so that for any point
    $q \in \Re^d$, $\mprX{q}$ can be computed in $O(\log n)$ time.
\end{theorem}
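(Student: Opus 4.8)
The plan is to realize (a refinement of) the probability map $\mathbb{M}(\UPntSet)$ as an arrangement of lines, to annotate each of its cells with its membership probability by a single incremental traversal, and to layer a planar point-location structure on top. First I would construct the arrangement $\Arrangement$ of the $O(n^{2})$ lines $\Line{\siX{i}\siX{j}}$, one through each pair of sites. As shown in the proof of \lemref{pvd}, $\mprX{\cdot}$ is constant on every relatively open face of $\Arrangement$, so $\Arrangement$ refines $\mathbb{M}(\UPntSet)$; it has $O(n^{4})$ faces and can be built, together with its incidence structure (a doubly connected edge list in which each edge records the pair $(\siX{i},\siX{j})$ spanning its supporting line), in $O(n^{4})$ time and space by any standard arrangement algorithm.

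Second, I would compute $\mprX{\cdot}$ for all cells of $\Arrangement$ in $O(n^{4})$ total time. Recall from the two-dimensional algorithm that
\[
  1-\mprX{\query}\;=\;\prod_{j}\inv{\priX{j}}\;+\;\sum_{i=1}^{n}\priX{i}\!\!\prod_{\siX{k}\in\GroupX{i}(\query)}\!\!\inv{\priX{k}},
\]
where $\GroupX{i}(\query)$ is the set of sites strictly to the right of the line through $\query$ and $\siX{i}$ oriented along $\vect{\query\siX{i}}$. Whether $\siX{k}\in\GroupX{i}(\query)$ depends on $\query$ only through the sign of the orientation of $(\query,\siX{i},\siX{k})$, which flips precisely when $\query$ crosses $\Line{\siX{i}\siX{k}}$; since the sites are in general position, the only edge of $\Arrangement$ whose crossing can insert or delete $\siX{k}$ from $\GroupX{i}$ lies on $\Line{\siX{i}\siX{k}}$. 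Thus crossing one edge of $\Arrangement$ on the line $\Line{\siX{i}\siX{j}}$ toggles the membership of $\siX{j}$ in $\GroupX{i}$ and of $\siX{i}$ in $\GroupX{j}$ and changes nothing else in the displayed formula. I would therefore maintain the $n{+}1$ summands, their sum $\Sigma$, and an $n\times n$ bit table encoding the current $\GroupX{i}$'s; starting from an arbitrary cell (where $\Sigma$ is computed from scratch in $O(n\log n)$ time by the two-dimensional algorithm), a depth-first traversal of the cell-adjacency graph updates, at each edge crossing, exactly two bits and two summands, and $\Sigma$, in $O(1)$ time, reverting the update on backtracking and recording $1-\Sigma$ for the entered cell. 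The cell-adjacency graph of a line arrangement is connected and has $O(n^{4})$ edges, each handled $O(1)$ times, so the traversal costs $O(n^{4})$. (If some $\priX{i}=1$, so $\inv{\priX{i}}=0$, the incremental ``division'' is undefined; this is handled by storing each summand as a pair consisting of its number of vanishing factors and the product of its nonvanishing ones, and $\Sigma$ as the sum of those summands whose vanishing count is $0$.)

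Third, I would preprocess $\Arrangement$ for point location: any standard planar point-location structure gives $O(\log n)$ query time and $O(n^{4})$ space, and to keep the preprocessing $O(n^{4})$ rather than $O(n^{4}\log n)$ one may instead use a hierarchy of cuttings for the $O(n^{2})$ lines (of total size and construction time $O(n^{4})$), which in $O(\log n)$ steps locates $\query$ in a region met by only $O(1)$ of the lines and hence in a single cell of $\Arrangement$, whose stored value is returned as $\mprX{\query}$. A query point that happens to lie on a lower-dimensional face of $\Arrangement$ requires a little extra bookkeeping, or can be side-stepped by a symbolic perturbation of $\query$.

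The crux, and essentially the only step that is not routine, is the $O(1)$-per-transition claim: one must verify that $\GroupX{i}(\query)$ depends on $\query$ only through the orientations of the triples $(\query,\siX{i},\siX{k})$, that each such orientation changes sign on exactly one line of the arrangement, and hence that moving $\query$ across one edge of $\Arrangement$ perturbs the closed form for $1-\mprX{\query}$ in only two of its $n{+}1$ terms, each by a single multiplicative factor. The remaining ingredients --- arrangement construction, connectivity of the traversal, and planar point location --- are standard.
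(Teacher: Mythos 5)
Your proposal is correct and follows essentially the same approach as the paper's (given in its appendix on computing the probability map in $O(n^4)$ time): build the arrangement of the $O(n^2)$ lines through pairs of sites, seed one face with the $O(n\log n)$ membership-probability algorithm, propagate to adjacent faces in $O(1)$ time by observing that crossing an edge on $\Line{\siX{i}\siX{j}}$ changes only the two witness-edge terms for $\siX{i}$ and $\siX{j}$, each by a single multiplicative factor, and then layer a point-location structure on top. Your extra care about the $\priX{i}=1$ (zero-factor) degeneracy and about keeping the point-location preprocessing within $O(n^4)$ addresses details the paper leaves implicit, but does not change the argument.
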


\SeeFullVer{}%
\InNotESAVer{%
   \apndref{improve} describes how to construct the data structure in
   $O(n^4)$ time.%
}

\mparagraph{Remark.} %
\hakan{See my e-mail. We may need to mention perturbation.} For $d \ge 3$, due to our general position assumption, we can compute
the membership probability only for $d$-faces of
$\mathbb{M}(\UPntSet)$, and not for the lower-dimensional faces.  In
that case, by utilizing a point-location technique in
\cite{c-chfda-93}, one can build a structure that can report the
membership probability of a query point (inside a $d$-face) in
$O(\log{n})$ time, with a preprocessing cost of $O(n^{d^2+d})$.

\mparagraph{Monte Carlo algorithm.} %
The size of the probability map may be prohibitive even for $d = 2$,
so we describe a simple, space-efficient Monte Carlo approach for
quickly approximating the membership probability, within absolute
error. Fix a parameter $s > 1$, to be specified later. The
preprocessing consists of $s$ rounds, where the algorithm creates an
outcome $A_j$ of $\UPntSet$ in each round $j$. Each $A_j$ is
preprocessed into a data structure so that for a query point $q \in
\Re^d$, we can determine whether $q \in \CHX{A_j}$.

For $d \leq 3$, we can build each $\CHX{A_j}$ explicitly and use
linear-size point-location structures with $O(\log n)$ query
time. This leads to total preprocessing time $O(sn\log n)$ and space
$O(sn)$.  For $d \geq 4$, We use the data structure
in~\cite{ms-loq-92} for determining whether $q \in A_j$, for all $1
\leq j \leq s$. For a parameter $\tradeoffPara$ such that $n \leq
\tradeoffPara \leq n^{\lfloor d/2\rfloor}$ and for any constant
$\fixedvalue > 0$, using $O(s\tradeoffPara^{1+\fixedvalue})$ space and
preprocessing, it can compute in $O(\frac{sn}{\tradeoffPara^{1/\lfloor
      d/2\rfloor}} \log^{2d + 1} n)$ time whether $q \in \CHX{A_j}$
for every $j$.

Given a query point $q\in\Re^d$, we check for membership in all
$\CHX{A_j}$, and if it lies in $k$ of them, we return $\hmprX{q} =
k/s$ as our estimate of $\mprX{q}$.  Thus, the query time is
$O(\frac{sn}{\tradeoffPara^{1/\lfloor d/2\rfloor}} \log^{2d + 1} n)$
for $d \geq 4$, $O(s \log n)$ for $d = 3$, and $O(\log n + s)$ for $d
= 2$ (using fractional cascading).

It remains to determine the value of $s$ so that $|\mprX{q} -
\hmprX{q}| \leq \eps$ for all queries $q$, with probability at least
$1-\delta$.  For a fixed $q$ and outcome $A_j$, let $X_i$ be the
random indicator variable, which is 1 if $q \in \CHX{A_j}$ and 0
otherwise.  Since $\mathsf{E}[X_i] = \mprX{q}$ and $X_i \in
\brc{0,1}$, using a Chernoff-Hoeffding bound on $\hmprX{q} = k/s =
(1/s) \sum_i X_i$, we observe that $ \Prob{| \hmprX{q} - \mprX{q}|
   \geq \eps} \leq 2 \exp(-2\eps^2 s) \leq \delta'.  $ By
\lemref{pvd}, we need to consider $O(n^{d^2})$ distinct queries. If we
set $1/\delta' = O(n^{d^2}/ \delta)$ and $s = O((1/\eps^2)
\log(n/\delta))$, we obtain the following theorem.

\begin{theorem}
    Let $\UPntSet$ be a set of uncertain points in $\Re^d$ under the
    {\multipoint} model with a total of $n$ sites, and let
    $\eps,\delta \in (0, 1)$ be parameters. For $d \geq 4$, $\UPntSet$
    can be preprocessed, for any constant $\fixedvalue > 0$, in
    $O((\tradeoffPara^{1+\fixedvalue} /\eps^2) \log\frac{n}{\delta})$
    time, into a data structure of size
    $O((\tradeoffPara^{1+\fixedvalue} /\eps^2) \log\frac{n}{\delta})$,
    so that with probability at least $1-\delta$, for any query point
    $q \in \Re^2$, $\hmprX{q}$ satisfying $|\mprX{q} - \hmprX{q}| \leq
    \eps$ and $\hmprX{q} > 0$ can be returned in
    $O(\frac{n}{\tradeoffPara^{1/\lfloor
          d/2\rfloor}\eps^2}\log\frac{n}{\delta}\log^{2d + 1} n)$
    time, where $\tradeoffPara$ is a parameter and $n \leq
    \tradeoffPara \leq n^{\lfloor d/2\rfloor}$. For $d \leq 3$, the
    preprocessing time and space are
    $O(\frac{n}{\eps^2}\log\log\frac{n}{\delta}\log n)$ and
    $O(\frac{n}{\eps^2}\log\frac{n}{\delta})$, respectively. The query
    time is $O(\frac{1}{\eps^2}\log(\frac{n}{\delta}) \log n)$
    (resp. $O(\frac{1}{\eps^2}\log\frac{n}{\delta})$) for $d = 3$
    (resp. $d = 2$).
\end{theorem}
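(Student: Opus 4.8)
The statement just formalizes the sampling scheme sketched above, so the plan has three parts: describe the structure, fix the sample size~$s$ by a union bound driven by \lemref{pvd}, and then read off the bounds as functions of~$s$. \emph{The structure.} In preprocessing we draw $s$ independent outcomes $A_1,\dots,A_s$ of $\UPntSet$; under the multipoint model each $A_j\subseteq P$ is obtained by choosing, independently for every uncertain point $i$, one of the sites of $\SiX{i}$ or the null location according to $\PriX{i}$, at cost $O(n)$ per outcome. For $d\le 3$ we build each hull $\CHX{A_j}$ explicitly and store it in a linear-size planar (resp.\ three-dimensional) point-location structure answering ``$\query\in\CHX{A_j}$?'' in $O(\log n)$ time; for $d\ge 4$ we plug $A_j$ into the polytope-membership data structure of~\cite{ms-loq-92}, whose space/time tradeoff is governed by~$\tradeoffPara$. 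A query $\query$ is answered by counting the number $k$ of indices with $\query\in\CHX{A_j}$ and returning $\hmprX{\query}=k/s$ (if $k=0$ we report $0$ directly). This already fixes the preprocessing time, the space, and the query time as functions of $s$, the only remaining parameter to choose.

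\textbf{Fixing $s$.} For a \emph{fixed} query point $\query$, the indicators $X_j=\1[\query\in\CHX{A_j}]$ are i.i.d.\ Bernoulli with mean exactly $\mprX{\query}$, because each $A_j$ is a faithful sample of $\UPntSet$; hence $\hmprX{\query}=\tfrac1s\sum_j X_j$ and the Chernoff--Hoeffding inequality gives
\begin{align*}
   \Prob{\bigl|\hmprX{\query}-\mprX{\query}\bigr|\ge\eps}\ \le\ 2\exp\!\bigl(-2\eps^2 s\bigr)\ =:\ \delta'.
\end{align*}
To make this uniform over all of $\Re^d$ we reuse the counting argument from the proof of \lemref{pvd}: on every open cell of the arrangement $\Arrangement(\HyperplaneSet)$ of the $O(n^d)$ hyperplanes spanned by $d$-tuples of sites, the function $\mprX{\cdot}$ is constant; moreover the \emph{estimator} $\hmprX{\cdot}$ is constant on the same cells, since every facet hyperplane of every possible sampled hull $\CHX{A_j}$ is one of these hyperplanes, so no event $\{\query\in\CHX{A_j}\}$ can change its truth value inside a cell. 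There are $O(n^{d^2})$ cells, so it suffices to guarantee accuracy at one representative point per cell. Choosing $1/\delta'=\Theta(n^{d^2}/\delta)$ and taking a union bound over the cells drives the overall failure probability below $\delta$; solving $2\exp(-2\eps^2 s)\le\delta'$ for $s$ then yields $s=O\!\bigl((1/\eps^2)\log(n/\delta)\bigr)$, the exponent $d^2$ vanishing into the constant because $d$ is fixed.

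\textbf{Complexities and the main obstacle.} Substituting this $s$ into the per-round costs above --- linear-size planar/three-dimensional point location with $O(\log n)$ query for $d\le 3$ (sped up by fractional cascading across the $s$ structures in the plane), and the bounds of~\cite{ms-loq-92} for $d\ge 4$ --- gives the stated preprocessing, space, and query bounds, and whenever $\hmprX{\query}>0$ the returned value satisfies $|\mprX{\query}-\hmprX{\query}|\le\eps$ with probability at least $1-\delta$. The only step that is not bookkeeping is the passage from a single query to all queries: a priori there are uncountably many query points while the Chernoff bound controls only one. That obstacle is overcome precisely because (i) both $\mprX{\cdot}$ and the random estimator $\hmprX{\cdot}$ are constant on the cells of $\Arrangement(\HyperplaneSet)$, and (ii) the cell count $O(n^{d^2})$ is small enough that its logarithm is still $O(\log(n/\delta))$, so that $s$ --- and hence all the resource bounds --- do not blow up.
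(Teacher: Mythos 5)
Your proposal is correct and follows essentially the same route as the paper: $s$ sampled outcomes with per-outcome membership structures, a Chernoff--Hoeffding bound for a fixed query, and a union bound over the $O(n^{d^2})$ cells from \lemref{pvd} to set $s=O((1/\eps^2)\log(n/\delta))$. Your explicit observation that the estimator $\hmprX{\cdot}$ is itself constant on the cells of $\Arrangement(\HyperplaneSet)$ (so one representative per cell suffices) is a point the paper leaves implicit, and is a welcome clarification rather than a deviation.
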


%
%

\section{Tukey Depth and Convex Hull}
\seclab{tukey}
The membership probability is neither a convex nor a continuous function,
as suggested by the example in the proof of \lemref{pvd}. In this
section, we establish a helpful structural property of this function,
intuitively showing that the probability stabilizes once we go deep
enough into the ``region''. Specifically, we show a connection between
the Tukey depth of a point $q$ with its membership probability; in two
dimensions, this also results in an efficient data structure for
approximating $\probX{q}$ quickly within a small absolute error.

\mparagraph{Estimating $\probX{q}$.} %
Let $Q$ be a set of weighted points in $\Re^d$. For a subset $A
\subseteq Q$, let $w(A)$ be the total weight of points in $A$. Then
the \emphi{Tukey depth} of a point $q \in \Re^d$ with respect to $Q$,
denoted by $\tukey(q, Q)$, is $\min w(Q\cap H)$ where the minimum is
taken over all halfspaces $H$ that contain $q$.%
\footnote{%
   If the points in $Q$ are unweighted, then $\tukey(q, Q)$ is simply
   the minimum number of points that lie in a closed halfspace that
   contains $q$.}  If $Q$ is obvious from the context, we use
$\tukey(q)$ to denote $\tukey(q, Q)$. Before bounding $\probX{q}$ in
terms of $\tukey(q, Q)$, we prove the following lemma.

\begin{lemma}%
    \lemlab{decomposition}%
    Let $Q$ be a finite set of points in $\Re^d$. For any $\pnt \in
    \Re^d$, there is a set $\SimplexSet = \brc{\simplex_1, \ldots,
       \simplex_T}$ of $d$-simplices formed by $Q$ such that
    \begin{inparaenum}[(i)]
        \item each $\simplex_i$ contains $\pnt$ in its interior;
        \item no pair of them shares a vertex; and
        \item $T \geq \ceil{\tukey(\pnt, Q) /d }$.
    \end{inparaenum}
\end{lemma}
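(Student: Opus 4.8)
The plan is to build the simplices greedily, using Tukey depth as a "budget" that guarantees we can always extract one more disjoint full-dimensional simplex containing $\pnt$, until the remaining point set is too small to matter. The key tool is the following consequence of the Carath\'eodory / centerpoint-style reasoning: if a finite point set $R$ has Tukey depth $\tukey(\pnt, R) \ge 1$ with respect to $\pnt$ (equivalently, $\pnt$ lies in the interior of $\CHX{R}$, or on its boundary — we will want the strict version, so assume $\tukey(\pnt,R)\ge 1$ and $\pnt\in\Int\CHX{R}$), then there exist $d+1$ points of $R$ whose convex hull is a $d$-simplex containing $\pnt$ in its interior. This is just Carath\'eodory's theorem together with general position (or a tiny perturbation argument); I would state it as a one-line sub-claim.

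The main step is then an iterative peeling argument. Initialize $R_0 := Q$. At step $i$ (starting from $i=1$), if $\tukey(\pnt, R_{i-1}) \ge 1$, pick a $d$-simplex $\simplex_i$ spanned by $d+1$ points of $R_{i-1}$ with $\pnt$ in its interior, using the sub-claim, and set $R_i := R_{i-1} \setminus V(\simplex_i)$ where $V(\simplex_i)$ are the $d+1$ vertices just used. By construction the $\simplex_i$'s are pairwise vertex-disjoint and each contains $\pnt$ in its interior, giving properties (i) and (ii). The process stops at the first index $T+1$ where $\tukey(\pnt, R_T) = 0$, i.e.\ where some closed halfspace $H \ni \pnt$ contains no point of $R_T$. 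The crucial observation for the count is that removing the $d+1$ vertices of one simplex decreases the Tukey depth by at most $d+1$; more usefully, for \emph{every} halfspace $H$ containing $\pnt$, each simplex $\simplex_i$ has at least one vertex in $H$ (since $\pnt \in \Int \simplex_i \subseteq \CHX{V(\simplex_i)}$, not all $d+1$ vertices can avoid a closed halfspace through $\pnt$ — otherwise $\pnt$ could not be in their convex hull). Hence after removing $\simplex_1,\dots,\simplex_i$, the weight (here, count) of $Q \cap H$ drops by at most... wait, I want a lower bound on $T$, so I phrase it the other way: each removal step drops $\tukey(\pnt,\cdot)$ by at most $d$, because a simplex contributes at most $d$ of its $d+1$ vertices to any fixed minimizing halfspace (at least one vertex lies in the complementary open halfspace, since $\pnt\in\Int\simplex_i$). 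Therefore after $T$ steps $\tukey(\pnt, R_T) \ge \tukey(\pnt, Q) - dT$, and the process cannot have stopped while this quantity is still $\ge 1$; so it runs for at least $\ceil{\tukey(\pnt,Q)/d}$ steps, giving $T \ge \ceil{\tukey(\pnt,Q)/d}$, which is property (iii).

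The step I expect to be the main obstacle is the ``at least one vertex in the complementary open halfspace'' claim, i.e.\ making precise why a $d$-simplex containing $\pnt$ in its interior loses at most $d$ vertices to any closed halfspace through $\pnt$. The clean argument: if all $d+1$ vertices of $\simplex_i$ lay in the closed halfspace $H$ with $\pnt$ on $\partial H$, then $\simplex_i \subseteq H$ and so $\pnt \in \partial H$ forces $\pnt \in \partial \simplex_i$, contradicting $\pnt \in \Int\simplex_i$; if $\pnt \in \Int H$, one instead uses that the minimizing halfspace in the definition of Tukey depth can be taken with $\pnt$ on its boundary (shrink $H$ toward $\pnt$ until its bounding hyperplane touches $\pnt$ — this only removes points, so the minimum is attained at such an $H$), reducing to the previous case. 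Two minor technical points to dispatch along the way: first, the general position / perturbation remark needed so that the Carath\'eodory simplices are genuinely full-dimensional with $\pnt$ in the \emph{open} interior (if $\pnt$ lies on a lower-dimensional face, perturb $Q$ infinitesimally away from $\pnt$; this does not increase $\tukey(\pnt,\cdot)$ below the target for a generic perturbation, or simply handle it by a limiting argument); and second, the case $\tukey(\pnt,Q)=0$, where the statement is vacuous since $\ceil{0/d}=0$ and the empty family works.
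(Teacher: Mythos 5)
Your proposal is correct and follows essentially the same route as the paper: greedily extract \Caratheodory simplices containing $\pnt$, delete their vertices, and observe that any closed halfspace through $\pnt$ contains at most $d$ of the $d+1$ vertices of a simplex with $\pnt$ in its interior, so each iteration decreases $\tukey(\pnt,\cdot)$ by at most $d$. Your write-up is somewhat more careful than the paper's on two points it glosses over---why the minimizing halfspace may be assumed to pass through $\pnt$, and why the \Caratheodory simplex can be taken with $\pnt$ in its open interior---but these are refinements of the same argument, not a different one.
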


\begin{proof}
    As long as $\tukey(\pnt, Q) > 0$, $\pnt\in \CHX{Q}$, and by
    \Caratheodory Theorem \cite{j-hrctt-93}, there is a $d$-simplex
    $\simplex$ with its $d+1$ vertices in $Q$ such that $\pnt \in
    \simplex$.  Remove the vertices of $\simplex$ from $Q$, and repeat
    the argument. Let $\simplex_1, \ldots, \simplex_T$ be the
    resulting simplices. Observe that at most $d$ vertices of
    $\simplex$ can be in an halfspace passing through $\pnt$, which
    implies that the Tukey depth of $\pnt$ drops by at most $d$ after
    each iteration of this algorithm. Hence $T \geq \ceil{\tukey(\pnt, Q) /
       d}$.
\end{proof}

We now use \lemref{decomposition} to bound $\probX{\pnt}$ in terms of
$\tukey(\pnt, P)$.

\begin{theorem}
    \theolab{eps:net}%
    Let $\UPntSet$ be a set of $n$ uncertain points in the uniform
    {\unipoint} model, that is, each point is chosen with the same
    probability $\pr > 0$. Let $P$ be the set of sites in
    $\UPntSet$. There is a constant $c > 0$ such that for any point
    $\pnt \in \Re^d$ with $\tukey(\pnt, P) = t$, we have
    \begin{math}
        (1-\pr)^t%
        \leq%
        1 - \probX{\pnt}%
        \leq%
        d \exp\Bigl(-\frac{\pr t}{cd^2}\Bigr).
    \end{math}
\end{theorem}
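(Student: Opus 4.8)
The plan is to prove the two inequalities separately. The lower bound $(1-\pr)^{t}\le 1-\probX{\pnt}$ is immediate: since $\tukey(\pnt,P)=t$, there is a closed halfspace $\halfspace$ with $\pnt\in\halfspace$ and $\cardin{P\cap\halfspace}=t$, and if none of the $t$ sites of $P\cap\halfspace$ is present in the random outcome $\SetB$ --- an event of probability $(1-\pr)^{t}$ --- then $\SetB$ lies entirely in the open halfspace $\Re^{d}\setminus\halfspace$, so $\CHX{\SetB}\subseteq\Re^{d}\setminus\halfspace$ and hence $\pnt\notin\CHX{\SetB}$. The upper bound is the substantive part; the idea is to use \lemref{decomposition} to reduce the question to a constant-$\eps$ net estimate on a ground set whose size does not depend on $n$.

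For the upper bound, assume $t\ge 1$ (if $t=0$ then $\pnt\notin\CHX{P}$, so $1-\probX{\pnt}=1\le d$ and there is nothing to prove). Apply \lemref{decomposition} to $P$ and keep exactly $T=\ceil{t/d}$ of the guaranteed $d$-simplices, $\simplex_{1},\ldots,\simplex_{T}$: each contains $\pnt$ in its interior and they have pairwise disjoint vertex sets. Let $V$ be the union of their vertex sets, so $\cardin{V}=(d+1)T=O(t+d)$ --- crucially, this is independent of $n$. The key geometric observation is that every open halfspace $\halfspace$ with $\pnt\in\halfspace$ must contain at least one vertex of each $\simplex_{j}$: otherwise the closed halfspace $\Re^{d}\setminus\halfspace$, being convex, would contain all vertices of $\simplex_{j}$ and hence $\pnt\in\Int\simplex_{j}$, contradicting $\pnt\notin\Re^{d}\setminus\halfspace$. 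Since the simplices are vertex-disjoint this gives $\cardin{V\cap\halfspace}\ge T=\frac{1}{d+1}\cardin{V}$ for every such $\halfspace$. Therefore, if $\SetB\cap V$ is a $\frac{1}{d+1}$-net of $V$ for the halfspace range space (of VC dimension $d+1$), then $\SetB\cap V$ --- and hence $\SetB$ --- meets every open halfspace containing $\pnt$, which by the separating-hyperplane theorem forces $\pnt\in\CHX{\SetB}$. Consequently $1-\probX{\pnt}\le\Prob{\SetB\cap V\text{ is not a }\frac{1}{d+1}\text{-net of }V}$.

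It remains to bound this failure probability. The sample $\SetB\cap V$ retains each of the $(d+1)T$ elements of $V$ independently with probability $\pr$, so $\mathsf{E}\bigl[\cardin{\SetB\cap V}\bigr]=\pr(d+1)T\ge\pr t$. Conditioning on the sample size and invoking the $\eps$-net theorem with $\eps=\frac{1}{d+1}$ and VC dimension $d+1$ (using a Chernoff bound to control the random size, which is only needed when $\pr t$ is large), this probability is at most $(\pr t)^{O(d)}\exp\!\bigl(-\Omega(\pr t/d)\bigr)$; the polynomial prefactor involves only $\pr t$ --- not $n$ --- precisely because $\cardin{V}$ is $n$-free. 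A short calculation then shows that, for a suitable absolute constant $c$, this bound is at most $d\exp\!\bigl(-\pr t/(cd^{2})\bigr)$ as soon as $\pr t$ exceeds a threshold of order $d^{2}\log d$, while below that threshold $d\exp\!\bigl(-\pr t/(cd^{2})\bigr)\ge 1$ for $c$ chosen large enough and the inequality is trivial. I expect the main obstacle to be exactly this bookkeeping --- stating the $\eps$-net failure estimate in a clean enough form and then packaging the two regimes under a single constant $c$; the conceptual heart, namely the reduction via \lemref{decomposition} to a constant-$\eps$ net on the $n$-independent set $V$, is what makes an $n$-free upper bound possible at all.
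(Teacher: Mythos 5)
Your proposal is correct and follows essentially the same route as the paper's own proof: the lower bound via a witness halfspace containing $t$ sites, and the upper bound via \lemref{decomposition}, the observation that every halfspace through $\pnt$ contains a vertex of each of the $T$ vertex-disjoint simplices (so is $\tfrac{1}{d+1}$-heavy in $V$), and the $\eps$-net theorem applied to the Bernoulli sample $\SetB\cap V$ of expected size at least $\pr t$. If anything, your handling of the failure-probability bookkeeping and the small-$\pr t$ regime is more explicit than the paper's, which simply solves $t\pr \geq cd^2\ln(d/\delta)$ for $\delta$.
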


\begin{proof}
    For the first inequality, fix a closed halfspace $H$ that contains
    $t$ points of $P$. If none of these $t$ points is chosen then
    $\pnt$ does not appear in the convex hull of the outcome, so $1 -
    \probX{\pnt} \geq (1-\pr)^t$.
    
    Next, let $\SimplexSet$ be the set of simplices of
    \lemref{decomposition}, and let $V$ be its set of vertices, where
    $T \geq \ceil{\tDepth /d }$. Let $n' = \cardin{V} = (d+1)T$.  Set
    $\eps = \frac{1}{d+1}$. A random subset of $V$ of size
    $O(\frac{d}{\eps}\log\frac{1}{\eps\delta}) = O(d^2
    \log\frac{d}{\delta})$ is an $\eps$-net for halfspaces, with
    probability at least $1-\delta$.
    
    In particular, any halfspace passing through $\pnt$, contains at
    least $T$ points of $V$. That is, all these halfspaces are
    $\eps$-heavy and would be stabbed by an $\eps$-net. Now, if we
    pick each point of $V$ with probability $\pr$, it is not hard to
    argue that the resulting sample $\RSample$ is an
    $\eps$-net\footnote{The standard argument uses slightly different
       sampling, but this is a minor technicality, and it is not hard to
       prove the $\eps$-net theorem with this modified sampling
       model.}. Indeed, the expected size (and in with sufficiently
    large probability) of $\RSample \cap V$ is $ n''%
    =%
    n' \pr%
    =%
    (d+1)T \pr %
    \geq%
    t \pr.  $ As such, for some constant $c$, we need the minimal
    value of $\delta$ such that the inequality $t \pr \geq c d^2
    \ln\frac{d}{\delta}$ holds, which is equivalent to
    $\exp\pth{\frac{t \pr}{cd^2}} \geq \frac{d}{\delta}$. This in turn
    is equivalent to
    \begin{math}
        \delta \geq {d} \exp\pth{- \frac{t \pr}{cd^2}}.
    \end{math}
    Thus, we set $\delta = {d} \exp\pth{- \frac{t \pr}{cd^2}}$.
    
    Now, with probability at least $1-\delta$, for a point $\pnt$ in
    $\Re^d$, with Tukey depth at least $t$, we have that $\pnt$ is in
    the convex-hull of the sample.
\end{proof}

\mparagraph{Remark.} %
\theoref{eps:net} can be extended to the {\multipoint} model. Assuming
that each uncertain point has $n_i$ sites and each site is chosen with
probability $\pr$, one can show that
\begin{math}
    (1-\pr)^t \leq 1 - \probX{\pnt}%
    \leq%
    d \exp\pth{-\frac{\pr t}{cd^2 n^*}},
\end{math}
where $n^* = \max_{1 \le i \le m} n_i$.

\theoref{eps:net} can be extended to the case when each point $p_i$ of
$\UPntSet$ is chosen with different probability, say, $\pr_i$.  In
order to apply \theoref{eps:net}, we convert $\UPntSet$ to a multiset
$\PntSetA$, as follows. We choose a parameter
$\eta=\tfrac{\delta}{10n}$. For each point $p_i \in \UPntSet$, we make
$w_i = \ceil{ \displaystyle\frac{\ln (1-\pr_i)}{\ln (1-\eta)}}$ copies
of $p_i$, each of which is selected with probability $\eta$.  We can
apply \theoref{eps:net} to $\PntSetA$ and show that if $\tukey(q,
\PntSetA) \geq \tfrac{d^2}{\eta} \ln (2d/\delta)$, then
$\probX{q,\PntSetA} \ge (1-\delta/2)$. Omitting the further details,
we conclude the following.

\begin{corollary}
    Let $\UPntSet=\brc{(p_1, \pr_1),\ldots,(p_n, \pr_n)}$ be a set of
    $n$ uncertain points in $\Re^d$ under the {\unipoint} model. For
    $1 \le i \le n$, set $w_i =
    \ceil{\frac{\ln(1-\pr_i)}{\ln(1-\delta/10n)}}$ be the weight of
    point $p_i$. If the (weighted) Tukey depth of a point $q \in
    \Re^d$ in $\brc{p_1, \ldots, p_n}$ is at least
    $\tfrac{10d^2n}{\delta} \ln (2d/\delta)$, then $\probX{q,
       \UPntSet} \ge 1-\delta$.
\end{corollary}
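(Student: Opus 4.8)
The plan is to reduce the weighted statement to the uniform-probability version of \theoref{eps:net} by a standard ``copy and thin'' discretization, and then push the parameters through carefully. First I would verify the key inequality relating the copy count $w_i$ to $\pr_i$: since $w_i = \ceil{\ln(1-\pr_i)/\ln(1-\eta)}$ with $\eta = \delta/(10n)$, we have $(1-\eta)^{w_i} \le 1-\pr_i$, so a group of $w_i$ independent copies of $p_i$, each present with probability $\eta$, fails entirely (all $w_i$ copies absent) with probability $(1-\eta)^{w_i} \le 1-\pr_i$; equivalently, at least one copy is present with probability at least $\pr_i$. Thus if we build the multiset $\PntSetA$ consisting of $w_i$ copies of each $p_i$ and run the uniform-$\eta$ unipoint process on it, the event ``$p_i$'s cluster contributes at least one live site'' stochastically dominates the event ``$p_i$ exists'' in the original $\UPntSet$. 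Since adding more live points can only enlarge the convex hull, this domination gives $\probX{q,\PntSetA} \le \probX{q,\UPntSet}$, so it suffices to lower-bound $\probX{q,\PntSetA}$.

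Next I would apply \theoref{eps:net} to $\PntSetA$ in the uniform model with common probability $\eta$. Let $N = \sum_i w_i = \cardin{\PntSetA}$. The theorem says $1 - \probX{q,\PntSetA} \le d\exp(-\eta\,\tukey(q,\PntSetA)/(cd^2))$ for the constant $c$ of that theorem. The crucial observation is that the (unweighted) Tukey depth of $q$ in the multiset $\PntSetA$ equals the weighted Tukey depth of $q$ in $\{p_1,\ldots,p_n\}$ with weights $w_i$: any closed halfspace $H$ through $q$ captures exactly $\sum_{p_i \in H} w_i$ copies, which is precisely $w(P \cap H)$. So the hypothesis ``weighted Tukey depth $\ge \tfrac{10 d^2 n}{\delta}\ln(2d/\delta)$'' is exactly ``$\tukey(q,\PntSetA) \ge \tfrac{d^2}{\eta}\ln(2d/\delta)$.'' Substituting this into the exponential bound yields (absorbing the constant $c$ into the threshold, i.e.\ after rescaling the stated constant $10$ appropriately or carrying $c$ explicitly) $d\exp(-\eta \cdot \tfrac{d^2}{\eta}\ln(2d/\delta)/(cd^2)) = d\,(\delta/2d)^{1/c} = d\,(\delta/(2d))^{1/c}$, which for the right normalization of the threshold constant is at most $\delta$; cleanly, taking the threshold $\tfrac{10d^2 n}{\delta}\ln(2d/\delta)$ (with $10$ chosen $\ge 2c$, say) gives $1-\probX{q,\PntSetA} \le d\exp(-\tfrac{1}{2}\ln(2d/\delta)) = d\sqrt{\delta/(2d)} \le \delta$ for $\delta$ in the relevant range, hence $\probX{q,\UPntSet} \ge \probX{q,\PntSetA} \ge 1-\delta$.

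The main obstacle I anticipate is bookkeeping the constants so that the clean threshold $\tfrac{10d^2 n}{\delta}\ln(2d/\delta)$ really works: one must choose $\eta$ small enough that $w_i \approx \pr_i/\eta$ up to the ceiling (so that the total multiset size $N$ stays polynomial and the $\eps$-net machinery of \theoref{eps:net} applies verbatim), yet the $\ceil{\cdot}$ rounding in $w_i$ only helps us (it makes clusters \emph{more} likely to be nonempty, preserving the domination direction), so rounding causes no trouble. The other delicate point is confirming that \theoref{eps:net}'s proof is insensitive to $\PntSetA$ being a multiset rather than a set in general position --- but its proof only uses $\eps$-nets for halfspaces on the vertex set of a simplex decomposition, and \Caratheodory plus the $\eps$-net theorem both tolerate coincident points (or one perturbs infinitesimally), so this is a routine technicality rather than a real gap. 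With those two checks in hand, the corollary follows by chaining the two inequalities.
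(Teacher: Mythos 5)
There is a genuine gap, and it sits exactly at the step you dismiss as easy. You correctly verify that $(1-\eta)^{w_i} \le 1-\pr_i$, so each cluster of $\PntSetA$ is occupied with probability $1-(1-\eta)^{w_i} \ge \pr_i$. But then your own reasoning --- ``adding more live points can only enlarge the convex hull'' --- forces the \emph{opposite} inequality from the one you write: the multiset process stochastically dominates the original one, the event $q \in \CHX{\SetB}$ is monotone increasing in the set of occupied locations, and therefore $\probX{q,\PntSetA} \ge \probX{q,\UPntSet}$, not $\le$. A lower bound on $\probX{q,\PntSetA}$ thus says nothing directly about $\probX{q,\UPntSet}$, and the chain ``$\probX{q,\UPntSet} \ge \probX{q,\PntSetA} \ge 1-\delta$'' does not close. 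The remark that the ceiling ``only helps us'' has the same confusion: making clusters more likely to be nonempty helps the multiset bound but \emph{hurts} the transfer back to $\UPntSet$.

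The missing ingredient is the complementary estimate and a union bound, which is the entire reason for the specific choice $\eta = \delta/(10n)$ and for proving the intermediate bound with $\delta/2$ rather than $\delta$ (two features of the statement your argument never uses --- a warning sign). Since $w_i \le 1 + \ln(1-\pr_i)/\ln(1-\eta)$, one gets $1-(1-\eta)^{w_i} \le 1-(1-\eta)(1-\pr_i) \le \pr_i + \eta$. Coupling the two processes so that the occupied-location set of $\UPntSet$ is contained in that of $\PntSetA$, they differ only when some cluster $i$ fires while $p_i$ does not, an event of probability at most $\sum_i \eta = n\eta = \delta/10$. Hence $\probX{q,\PntSetA} - \probX{q,\UPntSet} \le \delta/10$, and combining with $\probX{q,\PntSetA} \ge 1-\delta/2$ from \theoref{eps:net} (applied, as you correctly note, with $\tukey(q,\PntSetA)$ equal to the weighted Tukey depth $\ge \tfrac{d^2}{\eta}\ln(2d/\delta)$) gives $\probX{q,\UPntSet} \ge 1 - \delta/2 - \delta/10 \ge 1-\delta$. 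The rest of your outline --- the multiset construction, the identification of weighted and multiset Tukey depth, the constant bookkeeping in the exponent --- matches the paper's intended route; only this transfer step needs to be repaired.
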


\mparagraph{Data structure.} %
Let $\UPntSet$ be a set of points in the uniform {\unipoint} model in
$\Re^2$, i.e., each point appears with probability $\pr$. We now
describe a data structure to estimate $\probX{q}$ for a query point $q
\in \Re^2$, within additive error $1/n$. We fix a parameter $\tDepth_0
= \tfrac{c}{\pr}\ln n$ for some constant $c>0$. Let
$\TT=\brc{x\in\Re^2 \mid \tukey(x, \UPntSet) \ge \tDepth_0}$ be the
set of all points whose Tukey depth in $P$ is at least $\tDepth_0$.
$\TT$ is a convex polygon with $O(n)$ vertices~\cite{m-ccpps-91}. By
\theoref{eps:net}, $\probX{q} \ge 1-1/n^2$ for all points $q \in \TT$,
provided that the constant $c$ is chosen appropriately. We also
preprocess $P$ for halfspace range reporting
queries~\cite{cgl-pogd-85}. $\TT$ can be computed in time $O(n\log^3
n)$~\cite{m-ccpps-91}, and constructing the half-plane range reporting
data structure takes $O(n\log n)$ time~\cite{cgl-pogd-85}. So the
total preprocessing time is $O(n\log^3 n)$, and the size of the data
structure is linear.

\parpic[r]{
\begin{picture}(0,0)%
\includegraphics{./figs/query_pspdftex}%
\end{picture}%
\setlength{\unitlength}{1657sp}%
\begingroup\makeatletter\ifx\SetFigFont\undefined%
\gdef\SetFigFont#1#2#3#4#5{%
  \fontfamily{#3}\fontseries{#4}\fontshape{#5}%
  \selectfont}%
\fi\endgroup%
\begin{picture}(5299,2801)(990,-3744)
\put(5191,-2294){\makebox(0,0)[lb]{\smash{{\SetFigFont{8}{9.6}{\rmdefault}{\mddefault}{\updefault}{\color[rgb]{0,0,0}$\TT$}%
}}}}
\put(3490,-2906){\makebox(0,0)[lb]{\smash{{\SetFigFont{8}{9.6}{\rmdefault}{\mddefault}{\updefault}{\color[rgb]{0,0,0}$\ell_2$}%
}}}}
\put(2058,-2638){\makebox(0,0)[lb]{\smash{{\SetFigFont{8}{9.6}{\rmdefault}{\mddefault}{\updefault}{\color[rgb]{0,0,0}$q$}%
}}}}
\put(4755,-1696){\makebox(0,0)[lb]{\smash{{\SetFigFont{8}{9.6}{\rmdefault}{\mddefault}{\updefault}{\color[rgb]{0,0,0}$\xi_1$}%
}}}}
\put(5008,-2991){\makebox(0,0)[lb]{\smash{{\SetFigFont{8}{9.6}{\rmdefault}{\mddefault}{\updefault}{\color[rgb]{0,0,0}$\xi_2$}%
}}}}
\put(3368,-2317){\makebox(0,0)[lb]{\smash{{\SetFigFont{8}{9.6}{\rmdefault}{\mddefault}{\updefault}{\color[rgb]{0,0,0}$\ell_1$}%
}}}}
\end{picture}%
}
%
A query is answered as follows. Given a query point $q \in \Re^2$, we
first test in $O(\log n)$ time whether $q \in \TT$. If the answer is
yes, we simply return $1$ as $\probX{q}$.  If not, we compute in
$O(\log n)$ time the two tangents $\ell_1, \ell_2$ of $\TT$ from
$q$. For $i=1,2$, let $\xi_i = \ell_i \cap \TT$, and let $\ell_i^-$ be
the half-plane bounded by $\ell_i$ that does not contain $\TT$. Set
$\UPntSet_q = \UPntSet \cap (\ell_1^-\cup\ell_2^-)$ and $n_q =
|\UPntSet_q|$. Let $\RSample_q$ be the subset of $\UPntSet_q$ by
choosing each point with probability $\pr$.

By querying the half-plane range reporting data structure with each of
these two tangent lines, we compute the set $\UPntSet_q$ in time
$O(\log n + n_q)$. Let $\omega_q = \Prob{q \notin \CHX{\RSample_q \cup
      \TT}}$. We compute $\omega_q$, in $(n_q \log n_q)$ time, by
adapting the algorithm for computing $\probX{q}$ described in
\secref{mem}.

The correctness and efficiency of the algorithm follow from the
following lemma, whose proof is omitted from this version.
\begin{lemma}
    \lemlab{witness} \lemlab{witness-size}
    
    For any point $q \not\in \TT$,
    \begin{inparaenum}[(i)]
        \item $\cardin{\Prob{q\in\CHX{ \RSample_q\cup\TT } } -
           \probX{q}} \leq 1/n$;
        \item $n_q \le 4\tDepth_0=O(\pr^{-1}\log n)$.
    \end{inparaenum}
\end{lemma}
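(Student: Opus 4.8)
We prove the two parts separately, isolating a purely geometric fact and then a short probabilistic estimate.

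\textbf{Bounding $n_q$ (part (ii)).} The plan is to use the combinatorial structure of the depth region $\TT$. Since $\TT$ is a convex polygon, each of its edges lies on a line $\ell$ such that the closed half-plane bounded by $\ell$ and not containing $\TT$ contains fewer than $\tDepth_0$ sites of $P$ (this is the standard characterization of Tukey depth contours: a point just outside such an edge has depth below $\tDepth_0$, witnessed by a half-plane which, in the limit, is exactly the outer half-plane of the edge). Now the tangent line $\ell_1$ from $q$ touches $\TT$ at $\xi_1\in\partial\TT$. If $\xi_1$ lies in the relative interior of an edge $e$ of $\TT$, then $\ell_1$ is the supporting line of $e$, so $\ell_1^-$ is exactly the outer half-plane of $e$ and $\cardin{P\cap\ell_1^-}<\tDepth_0$. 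If $\xi_1$ is a vertex of $\TT$ incident to edges $e'$ and $e''$, then $\ell_1$ is a supporting line of $\TT$ at $\xi_1$ whose direction lies ``between'' those of $e'$ and $e''$; since all three lines pass through $\xi_1$, the half-plane $\ell_1^-$ is contained in the union of the outer half-planes of $e'$ and $e''$, whence $\cardin{P\cap\ell_1^-}<2\tDepth_0$. The same bounds hold for $\ell_2^-$, so $n_q=\cardin{P\cap(\ell_1^-\cup\ell_2^-)}\le\cardin{P\cap\ell_1^-}+\cardin{P\cap\ell_2^-}<4\tDepth_0=O(\pr^{-1}\log n)$.

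\textbf{A deterministic reduction (towards part (i)).} I claim that for every set $R\subseteq P$ and every $q\notin\TT$,
\[
  q\in\CHX{R\cup\TT}\iff q\in\CHX{\bigl(R\cap(\ell_1^-\cup\ell_2^-)\bigr)\cup\TT}.
\]
The direction ``$\Leftarrow$'' is immediate since the right-hand set is contained in the left-hand one. For ``$\Rightarrow$'', use that $q\in\CHX{S}$ iff every closed half-plane $H$ with $q\in\partial H$ meets $S$. Fix such an $H$. If $H\cap\TT\neq\emptyset$ we are done, since then $H$ meets $\TT$. Otherwise $\TT$ lies strictly on the side of $\partial H$ not equal to $H$; let $\ell_i^+$ be the closed half-plane complementary to $\ell_i^-$ (so $\TT\subseteq W:=\ell_1^+\cap\ell_2^+$, a wedge with apex $q$). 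An elementary argument on the angular order of lines through $q$ shows that $\partial H$ either enters $\Int W$ — in which case it separates the tangent points $\xi_1,\xi_2$ and hence crosses the chord $\xi_1\xi_2\subseteq\TT$, contradicting that $\TT$ lies strictly on one side — or it does not enter $\Int W$, in which case $\Int W$ (which contains $\TT^\circ$, hence lies on the non-$H$ side) is disjoint from $H$, forcing $H\subseteq\ell_1^-\cup\ell_2^-$. Now $q\in\CHX{R\cup\TT}$ provides a point $z\in(R\cup\TT)\cap H$; since $H\cap\TT=\emptyset$ we have $z\in R$, and $z\in H\subseteq\ell_1^-\cup\ell_2^-$, so $z\in R\cap(\ell_1^-\cup\ell_2^-)$. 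Thus every such $H$ meets $\bigl(R\cap(\ell_1^-\cup\ell_2^-)\bigr)\cup\TT$, proving the claim.

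\textbf{The probabilistic estimate (part (i)).} Let $\RSample$ be a full random outcome of $\UPntSet$, so $\probX{q}=\Prob{q\in\CHX{\RSample}}$, and couple $\RSample_q$ with $\RSample\cap(\ell_1^-\cup\ell_2^-)$ (identically distributed). Since $\CHX{\RSample}\subseteq\CHX{\RSample\cup\TT}$, the reduction gives $\probX{q}\le\Prob{q\in\CHX{\RSample\cup\TT}}=\Prob{q\in\CHX{\RSample_q\cup\TT}}$. Conversely, on the event $\{\TT\subseteq\CHX{\RSample}\}$ we have $\CHX{\RSample\cup\TT}=\CHX{\RSample}$, so the events $\{q\in\CHX{\RSample_q\cup\TT}\}$ and $\{q\in\CHX{\RSample}\}$ can differ only on $\{\TT\not\subseteq\CHX{\RSample}\}$; hence
\[
  0\le\Prob{q\in\CHX{\RSample_q\cup\TT}}-\probX{q}\le\Prob{\TT\not\subseteq\CHX{\RSample}}.
\]
Finally, $\CHX{\RSample}$ is convex, so $\TT\not\subseteq\CHX{\RSample}$ holds iff some vertex of $\TT$ is missed; $\TT$ has $O(n)$ vertices, each of Tukey depth at least $\tDepth_0$ in $P$, so by \theoref{eps:net} (with $d=2$) each is missed with probability at most $2\exp\bigl(-\Omega(\pr\tDepth_0)\bigr)\le n^{-3}$ once the constant defining $\tDepth_0=\Theta(\pr^{-1}\log n)$ is taken large enough relative to the constant of \theoref{eps:net}. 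A union bound then gives $\Prob{\TT\not\subseteq\CHX{\RSample}}=O(n)\cdot n^{-3}\le 1/n$, which is the bound in (i).

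\textbf{Main obstacle.} The real content is the deterministic ``sandwiching'' statement in its two incarnations: that every closed half-plane through $q$ either meets $\TT$ or is swallowed by $\ell_1^-\cup\ell_2^-$ (used in (i)), and that $\ell_i^-$ is covered by the outer half-planes of the at most two edges of $\TT$ incident to the tangent point (used in (ii)). Making these airtight needs care with the closed/open half-plane conventions and with the split according to whether a tangent grazes an edge of $\TT$ or touches a vertex; the rest is a routine union bound plus an appeal to \theoref{eps:net}.
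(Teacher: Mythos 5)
The paper explicitly omits the proof of this lemma (``whose proof is omitted from this version''), so there is no in-paper argument to compare against; judged on its own, your proof is correct and complete, and it is almost certainly the intended one. Part (ii) correctly reduces $\ell_i^-$ to the outer half-planes of the (at most two) edges of $\TT$ incident to the tangent point via the normal-cone inequality $\langle x,u'\rangle\le 0,\ \langle x,u''\rangle\le 0\Rightarrow\langle x,\lambda u'+\mu u''\rangle\le 0$, and each such outer half-plane contains at most $\tDepth_0$ sites (up to the $O(1)$ sites lying on the supporting line itself, depending on the open/closed convention you already flag -- this perturbs the constant $4$ by at most an additive $O(1)$ and is harmless for the $O(\pr^{-1}\log n)$ bound). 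The deterministic sandwiching claim in part (i) is sound: a closed half-plane through $q$ disjoint from $\TT$ cannot enter the open wedge $\ell_1^+\cap\ell_2^+$ without crossing the chord $\xi_1\xi_2\subseteq\TT$, hence is contained in $\ell_1^-\cup\ell_2^-$, which justifies replacing $\RSample$ by $\RSample_q$ exactly. The final step matches what the surrounding text of the paper already sets up: \theoref{eps:net} gives $\probX{v}\ge 1-1/n^{O(1)}$ for each of the $O(n)$ vertices $v$ of $\TT$ (all of depth at least $\tDepth_0$), and the union bound controls $\Prob{\TT\not\subseteq\CHX{\RSample}}$, which is precisely the one-sided error term.
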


By \lemref{witness-size}, $n_q = O(\pr^{-1}\log n)$, so the query
takes $O(\pr^{-1}\log (n)\log\log n)$ time. We thus obtain the
following.

\begin{theorem}
    Let $\UPntSet$ be a set of $n$ uncertain points in $\Re^2$ in the
    {\unipoint} model, where each point appears with probability
    $\pr$. $\UPntSet$ can be preprocessed in $O(n\log^3 n)$ time into
    a linear-size data structure so that for any point $q \in \Re^2$,
    returns a value $\tprobX{q}$ in $O(\pr^{-1}\log (n) \log\log n)$
    time such that $|\tprobX{q} - \probX{q}| \le 1/n$.
\end{theorem}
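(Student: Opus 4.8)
The plan is to assemble the ingredients developed above --- the depth region $\TT$, the half-plane range reporting structure, and \lemref{witness} --- into a two-phase scheme: a query point that is deep gets the trivial answer $1$, and a query point that is shallow is handled by an exact computation on a small, locally relevant sub-instance.

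\textbf{Preprocessing.} I would fix $\tDepth_0 = \tfrac{c}{\pr}\ln n$, choosing the constant $c$ (via \theoref{eps:net} with $d=2$) large enough that $1-\probX{x}\le 1/n^2$ whenever $\tukey(x,\UPntSet)\ge\tDepth_0$. Then compute $\TT=\brc{x\in\Re^2\mid \tukey(x,\UPntSet)\ge\tDepth_0}$, a convex polygon with $O(n)$ vertices, in $O(n\log^3 n)$ time using \cite{m-ccpps-91}, and build a half-plane range reporting structure on $P$ in $O(n\log n)$ time using \cite{cgl-pogd-85}. Both use linear space, which already yields the preprocessing time and space bounds in the statement.

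\textbf{Answering a query.} Given $q$, I would first test $q\in\TT$ by binary search over $\partial\TT$ in $O(\log n)$ time; if so, return $\tprobX{q}=1$, which by the choice of $c$ is within $1/n^2\le 1/n$ of $\probX{q}$. Otherwise, compute the two tangent lines $\ell_1,\ell_2$ from $q$ to $\TT$ (touching at $\xi_1,\xi_2$) in $O(\log n)$ time, and extract $\UPntSet_q=\UPntSet\cap(\ell_1^-\cup\ell_2^-)$ with two half-plane reporting queries in $O(\log n+n_q)$ time, where by \lemref{witness-size}(ii) $n_q\le 4\tDepth_0=O(\pr^{-1}\log n)$. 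Letting $\RSample_q$ retain each site of $\UPntSet_q$ independently with probability $\pr$, the returned value is $\tprobX{q}=1-\omega_q$ with $\omega_q=\Prob{q\notin\CHX{\RSample_q\cup\TT}}$. To evaluate $\omega_q$ I would rerun the radial-sweep witness-edge algorithm of \secref{mem} on the instance where $\TT$ is present in every outcome: $q\notin\CHX{\RSample_q\cup\TT}$ exactly when $q$ is a vertex of $\CHX{\RSample_q\cup\TT\cup\brc{q}}$, and then the two hull edges incident to $q$ end at sites of $\RSample_q$ or at $\xi_1,\xi_2$ --- the only vertices of $\TT$ on the silhouette as seen from $q$ --- so the sweep need only run over $\UPntSet_q\cup\brc{\xi_1,\xi_2}$, with $\xi_1,\xi_2$ treated as existing with probability $1$ and the test whether $\TT$ lies to the right of $\vect{qp}$ resolved in $O(1)$ time from the two tangents for each swept site $p$. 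This costs $O(n_q\log n_q)=O(\pr^{-1}\log(n)\log\log n)$, which dominates the query time. Correctness then follows from \lemref{witness}(i): $\cardin{\tprobX{q}-\probX{q}}=\cardin{\Prob{q\in\CHX{\RSample_q\cup\TT}}-\probX{q}}\le 1/n$.

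\textbf{Main obstacle.} The one genuinely non-routine step is \lemref{witness}(i) --- that discarding the sites outside $\ell_1^-\cup\ell_2^-$ and replacing the random contribution of those sites by the deterministic region $\TT$ perturbs the membership probability by at most $1/n$ --- which I would invoke here as a black box (its proof is omitted from this version). Everything else (the tangent queries, the half-plane reporting, adapting the planar witness-edge sweep to a generator set that is partly deterministic, and the running-time bookkeeping) is standard.
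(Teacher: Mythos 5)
Your proposal follows the paper's own construction essentially verbatim: the same threshold $\tDepth_0=\tfrac{c}{\pr}\ln n$, the same Tukey region $\TT$ computed via \cite{m-ccpps-91}, the same half-plane range reporting structure, the same tangent-based query that returns $1$ for deep points and otherwise evaluates $\omega_q=\Prob{q\notin\CHX{\RSample_q\cup\TT}}$ on the $O(\pr^{-1}\log n)$ sites of $\UPntSet_q$ by adapting the planar radial sweep, with correctness delegated to \lemref{witness} exactly as the paper does (whose proof is likewise omitted there). The extra detail you supply on handling $\TT$ as a deterministic generator in the sweep via $\xi_1,\xi_2$ is a reasonable fleshing-out of what the paper leaves implicit, and the time and space bookkeeping matches.
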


%
%

\section{$\some$-Hull}
\seclab{some_hull}

In this section, we consider the {\multipoint} model, i.e., $\UPntSet$
is a set of $m$ uncertain point defined by the pairs $\MultipointSet$.
A convex set $C
\subseteq \Re^2$ is called \textit{$\some$-dense} with respect to
$\UPntSet$ if it contains $\some$-fraction of each $(P_i, \PR_i)$,
i.e., $\sum_{p_i^j} \pr_i^j \geq \some$ for all $i \leq m$. The
\textit{$\some$-hull} of $\UPntSet$, denoted by $\CHSomeX{\UPntSet}$,
is the intersection of all convex $\some$-dense sets with respect to
$\UPntSet$. Note that for $m = 1$, $\CHSomeX{\UPntSet}$ is the set of
points whose Tukey depth is at least $1 - \some$. We first prove an
$O(n)$ upper bound on the complexity of $\CHSomeX{\UPntSet}$ and then
describe an algorithm for computing it.

\begin{theorem}
    \thmlab{linear_complexity}%
    Let $\UPntSet = \MultipointSet$ be a set of $m$ uncertain points
    in $\Re^2$ under the \multipoint model with $P = \bigcup_{i = 1}^m
    P_i$ and $|P| = n$. For any $\some \in [0, 1]$,
    $\CHSomeX{\UPntSet}$ has $O(n)$ vertices.
\end{theorem}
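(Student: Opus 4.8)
The plan is to characterize $\CHSomeX{\UPntSet}$ as an intersection of at most $O(n)$ halfplanes, which immediately bounds its combinatorial complexity by $O(n)$. First I would argue that $\CHSomeX{\UPntSet}$ is itself a convex set, and in fact $\some$-dense: since the $\some$-hull is an intersection of convex sets, it is convex; and because each $P_i$ is finite, the condition ``contains $\some$-fraction of $(P_i,\PR_i)$'' is a closed condition preserved under intersections whose members each satisfy it (this needs a small argument, since an arbitrary intersection of $\some$-dense sets might conceivably lose mass — but each point $p_i^j$ either lies in all of them or is excluded by at least one, so the surviving mass of $P_i$ in the intersection equals the infimum over $\some$-dense sets of the surviving mass, and one checks this infimum is still $\ge \some$ using finiteness of $P_i$). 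Hence $\CHSomeX{\UPntSet}$ is the (unique) minimal convex $\some$-dense set.

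Next, the key structural claim: $C = \CHSomeX{\UPntSet}$ is the intersection of the halfplanes $h$ such that $h$ is $\some$-dense (i.e.\ for every $i$, the points of $P_i$ lying in $h$ carry weight $\ge \some$). Call such halfplanes \emph{good}. Clearly $C$ is contained in the intersection $C'$ of all good halfplanes, since each good halfplane is a convex $\some$-dense set. For the reverse inclusion I would show every good halfplane can be ``pushed'' to a canonical position: for each direction $\thevector$, among all good halfplanes with outer normal $\thevector$ there is an innermost one, and its bounding line passes through a site of some $P_i$ (otherwise translate it inward until the mass of some $P_i$ on the inner side would drop below $\some$ — the line stops exactly at a site, or at a configuration determined by sites, at which point a discrete event occurs). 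The standard way to finish is: $C'$, being an intersection of halfplanes, has a boundary; at each boundary point the supporting line is a limit of good halfplanes, hence good; and a good halfplane is $\some$-dense and convex, so $C \subseteq$ each, giving $C = C'$. Then I would bound the number of \emph{distinct} bounding lines of innermost good halfplanes: as the normal direction $\thevector$ rotates through $[0,2\pi)$, the innermost good halfplane changes only at directions where the identity of the ``critical'' constraint changes, and each such combinatorial change is charged to a site of $P$ entering or leaving the relevant side for some $P_i$. Since each site of $P$ participates in $O(1)$ such events (it is swept past once as the direction rotates), the total number of distinct edges of $C$ is $O(n)$.

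The cleanest route for the edge-count is probably a direct charging argument: each edge $e$ of $\CHSomeX{\UPntSet}$ lies on a line $\ell$ that is ``tight'' for at least one uncertain point $P_i$, meaning the weight of $P_i$ on the inner side is exactly $\some$ and $\ell$ passes through at least one site $p_i^j \in P_i$ (by the minimality of $C$ — if no line through the edge were tight for any $P_i$, we could cut off a corner and stay $\some$-dense, contradicting minimality of $C$). Assign to $e$ such a witness site. I would then show that a single site $p_i^j$ is the witness of only $O(1)$ edges: the edges it witnesses are those where $\ell$ passes through $p_i^j$ and rotates about it while the $P_i$-mass on the inner side stays at $\some$; as $\ell$ rotates about the fixed point $p_i^j$, the mass on the inner side is monotone in the rotation angle except for jumps when $\ell$ sweeps past another site of $P_i$, so the set of angles achieving exactly mass $\some$ is $O(1)$ intervals, contributing $O(1)$ edges. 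Summing over all $n$ sites gives $O(n)$ edges, hence $O(n)$ vertices.

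The main obstacle I anticipate is making the ``minimality forces a tight supporting line through a site'' step fully rigorous, including the boundary cases where the $\some$-mass is attained with some of $P_i$'s mass exactly on the line $\ell$ (ties), and handling the possibility that $\CHSomeX{\UPntSet}$ is empty, unbounded, lower-dimensional, or that several uncertain points are simultaneously tight along one edge. These degeneracies don't change the $O(n)$ bound but require care in the charging so that no edge goes uncharged and no site is overcharged; I expect the paper handles them by a general-position or perturbation remark, or by the observation that coincidences only decrease the number of distinct edges. Modulo those technicalities, the halfplane-intersection characterization plus the per-site $O(1)$ charging is the whole proof.
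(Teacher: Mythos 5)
Your overall strategy---represent $\CHSomeX{\UPntSet}$ as an intersection of halfplanes whose bounding lines pass through sites of $P$, then charge each edge to a witness site---is the same as the paper's. But the step that actually produces the $O(n)$ bound is justified incorrectly. You argue that a single site $p_i^j$ witnesses only $O(1)$ edges because, as the line $\ell$ rotates about $p_i^j$, ``the mass on the inner side is monotone in the rotation angle except for jumps.'' That is not true: the mass is piecewise constant and changes by the weight of a site of $P_i$ each time $\ell$ sweeps past it; over a full rotation every other site is swept past twice, the signs of these changes need not be monotone, and the set of angles at which $\ell$ is $\some$-tangent to $P_i$ at $p_i^j$ can consist of $\Theta(n_i)$ intervals rather than $O(1)$. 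The same issue undermines your rotating-normal sweep: the number of combinatorial changes of the innermost $\some$-dense halfplane is governed by the complexity of a level in a line arrangement (in the dual, the $\some$-level $\TauLevel_i$ the paper uses in its algorithm), which is superlinear in the worst case, so ``each site is swept past once'' does not bound the number of distinct edges. The correct reason each site is charged $O(1)$ times is extrinsic rather than intrinsic: only supporting lines of the convex region $C=\CHSomeX{\UPntSet}$ can contain its edges, and through any fixed point $p$ a convex set has at most two supporting lines---the two tangents if $p \notin C$, the at most two edge lines if $p \in \partial C$, and none if $p \in \Int(C)$. This tangency argument is exactly how the paper bounds the number of halfplanes by $2n$.

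A secondary error: $\CHSomeX{\UPntSet}$ is in general \emph{not} $\some$-dense, and there is no ``unique minimal convex $\some$-dense set.'' Take $m=1$ with four sites at the corners of a square, each of probability $1/4$, and $\some = 1/2$: the left and right closed halfplanes are both $\some$-dense, yet their intersection misses half the mass of $P_1$. Minimal convex $\some$-dense sets are genuinely non-unique, and the paper instead observes that every minimal one is the convex hull of the sites it contains (hence a polygon with vertices in $P$) and that $\CHSomeX{\UPntSet}$ is the intersection of this finite family, which is why every edge lies on a line through points of $P$. Your reduction to an intersection of $\some$-dense halfplanes is itself fine (a halfplane separating a point from a convex $\some$-dense set is $\some$-dense), and with the tangency-based charging substituted for the monotonicity claim the proof goes through; as written, however, the counting step fails.
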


\begin{proof}
    We call a convex $\some$-dense set $C$ \textit{minimal} if there
    is no convex $\some$-dense set $C'$ such that $C' \subset C$. A
    minimal convex $\some$-dense set $C$ is the convex hull of $P \cap
    C$. Therefore $C$ is a convex polygon whose vertices are a subset
    of $P$. Obviously $\ConvexTauHull{\UPntSet}$ is the intersection of
    minimal convex $\some$-dense sets. Therefore each edge of
    $\ConvexTauHull{\UPntSet}$ lies on a line passing through a pair of
    points of $P$, i.e.,$\ConvexTauHull{\UPntSet}$ is the intersection of a set
    $H$ of halfplanes whose bounding line passes through a pair of
    points of $P$. Next we argue that $|H| \leq 2n$.
    
    Fix a point $p \in P$. We claim that $H$ contains at most two
    halfplanes whose bounding lines pass through $p$. Indeed if $p \in
    \Int(\ConvexTauHull{\UPntSet})$, then no bounding line of $H$
    passes through $p$; if $p \in \partial (\ConvexTauHull{\UPntSet})$,
    then at most two bounding lines of $H$ pass through $p$; and if $p
    \notin \ConvexTauHull{\UPntSet}$, then there are two tangent
    to$\ConvexTauHull{\UPntSet}$ from $p$. Hence at most two bounding
    lines of $H$ pass through $p$, as claimed.
\end{proof}

We describe a property of the set of lines supporting the edges of
$\CHSomeX{\UPntSet}$, which will be useful for computing
$\ConvexTauHull{\UPntSet}$. We call a line $\ell$ passing through a
point $p \in P_i$ \textit{$\some$-tangent} of $P_i$ at $p$ if one of
the open half-planes bounded by $\ell$ contains less than
$\some$-fraction of points of $P_i$ but the corresponding closed
half-plane contains at least $\some$-fraction of points. Using a
simple perturbation argument, the following can be proved.

\begin{lemma}
    \lemlab{tangent}%
    A line supporting an edge of $\CHSomeX{\UPntSet}$ is
    $\some$-tangent at two points of $P$.
\end{lemma}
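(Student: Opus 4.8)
The plan is to derive \lemref{tangent} from two ingredients: a structural claim that the half-plane bounding an edge of $\CHSomeX{\UPntSet}$ on the hull side is itself $\some$-dense, and a rotational perturbation that peels a sliver off the hull near each endpoint of the edge. Throughout write $C:=\CHSomeX{\UPntSet}$; by \thmref{linear_complexity}, $C$ is a convex polygon, so fix one of its edges $e$, let $\ell$ be the line through $e$, let $H$ be the closed half-plane bounded by $\ell$ with $C\subseteq H$, let $H^{\circ}$ be its interior, and let $H'$ be the complementary open half-plane. For a set $A$ of sites, $w(A)$ denotes its total probability, as in \secref{tukey}. Since $C$ is by definition the intersection of all convex $\some$-dense sets, every such set contains $C$, hence contains $e$.

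\textbf{Key sub-claim: $H$ is $\some$-dense.} Suppose not, say $w(P_a\cap H)<\some$ for some index $a$. Because $C$ has an edge, at least one convex $\some$-dense set exists, so $w(P_i)\ge\some$ for every $i$; in particular $R:=P_a\cap H'$ is a nonempty finite set. Any convex $\some$-dense set $D$ contains $e$ and satisfies $w(P_a\cap D)\ge\some>w(P_a\cap H)$, so $D$ must contain some point of $R$, hence $D\supseteq\CHX{e\cup\{r\}}$ for some $r\in R$. Grouping the $\some$-dense sets according to which point of $R$ they contain, and using finiteness of $R$, we get $C\supseteq\bigcap_{r\in R}\CHX{e\cup\{r\}}$. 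Each $\CHX{e\cup\{r\}}$ is a triangle with base $e\subseteq\ell$ and apex $r$ strictly on the $H'$ side, so it contains a short segment dropping from the midpoint of $e$ into $H'$; as $R$ is finite, so does the intersection of all these triangles, and therefore so does $C$ — contradicting $C\subseteq H$.

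\textbf{Extracting the two tangent points.} Let $v_1,v_2$ be the endpoints of $e$. Rotate $\ell$ about $v_1$ by a small angle so that near $v_2$ the rotated line moves into $H^{\circ}$; then for every sufficiently small $\theta>0$ the closed half-plane $H_\theta$ bounded by the rotated line and lying on the hull side omits the point $v_2\in C$, so $H_\theta\not\supseteq C$, hence $H_\theta$ is not $\some$-dense, and we may pick an index $b(\theta)$ with $w(P_{b(\theta)}\cap H_\theta)<\some$. As $\theta\downarrow0$, $H_\theta$ converges to the set $L$ equal to the union of $H^{\circ}$ with the closed ray of $\ell$ that starts at $v_1$ and points away from $v_2$, in the sense that for each fixed site $p$ one has $p\in H_\theta$ for all small $\theta$ iff $p\in L$. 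Since there are only $m$ indices and each $P_i$ is finite, the pigeonhole principle gives a single index $b$ with $w(P_b\cap L)<\some$. In particular $w(P_b\cap H^{\circ})<\some$, while $w(P_b\cap H)\ge\some$ by the sub-claim; writing $H=L\sqcup B$ where $B$ is the open ray of $\ell$ from $v_1$ toward $v_2$ and subtracting, we find $w(P_b\cap B)>0$, i.e.\ there is a site $q\in P_b$ on $\ell$ strictly on the $v_2$-side of $v_1$. The pair $(H^{\circ},H)$ then witnesses that $\ell$ is $\some$-tangent of $P_b$ at $q$. Rotating about $v_2$ toward $v_1$ instead yields, symmetrically, a site $q'\in P_{b'}$ on $\ell$ strictly on the $v_1$-side of $v_2$ at which $\ell$ is $\some$-tangent of $P_{b'}$. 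Under the general-position assumption no site of $P$ lies in the relative interior of $e$, so $q$ and $q'$ lie in the two disjoint rays of $\ell$ obtained by deleting the relative interior of $e$, hence $q\neq q'$; the degenerate case is reduced to this one by a perturbation.

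\textbf{Main obstacle.} The delicate point is the sub-claim that $H$ itself is $\some$-dense: without it the rotation argument delivers only $w(P_b\cap H^{\circ})<\some$, which is insufficient, because the definition of ``$\some$-tangent'' is sensitive to the open/closed distinction along $\ell$ — one genuinely needs $w(P_b\cap H)\ge\some$ and $w(P_b\cap H^{\circ})<\some$ at the same time, and it is exactly this gap that pins a site of $P$ onto $\ell$. Everything else is bookkeeping: tracking which sites of $P$ on $\ell$ persist in the limiting half-plane $L$, and absorbing degeneracies (several sites of $P$ on $\ell$, or an endpoint of $e$ being a site) into the perturbation; the limit and pigeonhole steps only use finiteness of the index set and of each $P_i$, and the triangle estimate in the sub-claim is elementary.
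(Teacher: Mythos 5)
The paper itself proves \lemref{tangent} only by the remark that it follows from ``a simple perturbation argument,'' so there is no written proof to compare against; your write-up is the intended kind of argument, and most of it is sound. In particular, your key sub-claim that the closed hull-side half-plane $H$ of an edge $e$ is itself $\some$-dense is correct and well proved (the triangle intersection $\bigcap_{r\in R}\CHX{e\cup\{r\}}$ really does poke into $H'$ near the midpoint of $e$), and the rotation-and-limit step correctly extracts, for each rotation, an index $b$ with $w(P_b\cap H^{\circ})<\some\le w(P_b\cap H)$ and a site of $P_b$ on the appropriate portion of $\ell$.

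The gap is in the distinctness of $q$ and $q'$. Rotating about $v_1$ certifies only a site in the open ray of $\ell$ on the $v_2$-side of $v_1$, and rotating about $v_2$ only a site on the $v_1$-side of $v_2$; these two rays overlap exactly in the relative interior of $e$, and both rotations may return the same index and the same site there. Your escape hatch --- that ``under the general-position assumption no site of $P$ lies in the relative interior of $e$'' --- is asserted, not proved: Section 5 states no general-position hypothesis, and even granting ``no three sites collinear,'' the location of sites relative to the edges of the derived object $\CHSomeX{\UPntSet}$ is not an input condition; establishing it requires showing that $\ell$ already carries tangency-witnessing sites at or beyond \emph{both} endpoints of $e$, which is essentially the lemma itself. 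Concretely, with three groups whose sites on $\ell$ sit at $v_1$, at the midpoint of $e$, and at $v_2$ (each group barely failing $\some$ in $H^{\circ}$), the edge $e$ is legitimately pinned by the two endpoint sites, yet each of your two rotations is allowed to return the middle group and its midpoint site, so the argument as written outputs a single tangent point even though the lemma holds. The fix is small and keeps your machinery intact: rotate in both directions about a single point $m$ in the relative interior of $e$ that is not a site. The two limiting half-planes then differ from $H$ on the two \emph{disjoint} open rays of $\ell\setminus\{m\}$, so the two witnessing sites lie in disjoint sets and are automatically distinct --- whether or not the two rotations return the same index --- and your sub-claim certifies $\some$-tangency at each exactly as you argued.
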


\InNotESAVer{%
   \mparagraph{Algorithm.} %
We describe the algorithm for computing the upper boundary
$\UpperHull$ of$\ConvexTauHull{\UPntSet}$. The lower boundary of
$\ConvexTauHull{\UPntSet}$ can be computed analogously. It will be
easier to compute $\UpperHull$ in the dual plane. Let $\UpperHull^*$
denote the dual of $\UpperHull$.

Recall that the dual of a point $p = (a, b)$ is the line $p^*: y = ax
- b$, and the dual of a line $\ell: y = m x + c$ is the point $\ell^*
= (m, -c)$. The point $p$ lies above/below/on the line $\ell$ if and
only if the dual point $\ell^*$ lies above/below/on the dual line
$p^*$. Set $P_i^* = \brc{p_i^{j*} \mid p_i^j \in P_i}$ and $P^* =
\bigcup_{i = 1}^m P_i^*$. For a point $q \in \Re^2$ and for $i \leq
m$, let $\kappa(q, i) = \sum \pr_i^j$ where the summation is taken
over all points $p_i^j \in P_i$ such that $q$ lies below the dual line
$p_i^{j*}$. We define the $\some$-level $\TauLevel_i$ of $P_i^*$ to be
the upper boundary of the region $\brc{ q \in \Re^2 \mid \kappa(q, i)
   \geq \some}$. $\TauLevel_i$ is an $x$-monotone polygonal chain
composed of the edges of the arrangement
$\Arrangement(P_i^*)$. Further, the dual line of a point on
$\TauLevel_i$ is a $\some$-tangent line of $P_i$. Let $\TauLevel$ be
the lower envelope of $\TauLevel_1, \ldots, \TauLevel_m$.

Let $\ell$ be the line supporting an edge of $\UpperHull$. Using
\lemref{tangent}, it can be argued that the dual point $\ell^*$ is a
vertex of $\TauLevel$. Next, let $q$ be a vertex of $\UpperHull$, then
$q$ cannot lie above any $\some$-tangent line of any $P_i$, which
implies that the dual line $q^*$ passes through a pair of vertices of
$\TauLevel$ and does not lie below any vertex of $\TauLevel$.  Hence,
each vertex of $\UpperHull$ corresponds to an edge of the upper
boundary of the convex hull of $\TauLevel$.  By
\thmref{linear_complexity}, $\UpperHull^*$, the dual of $\UpperHull$,
has $O(n)$ vertices.

We now describe the algorithm for computing $\UpperHull^*$, which is
similar to the one used for computing the convex hull of a level in an
arrangement of lines~\cite{asw-actp-08, m-ccpps-91}. We begin by
describing a simpler procedure, which will be used as a subroutine in
the overall algorithm.

\begin{lemma}
    \lemlab{intersection_detection}%
    Given a line $\ell$, the intersection points of $\ell$ and
    $\TauLevel$ can be computed in $O(n \log n)$ time.
\end{lemma}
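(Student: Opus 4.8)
The plan is to reduce the problem to a small number of one-dimensional searches along the line $\ell$, exploiting the fact that $\TauLevel$ is the lower envelope of the $m$ upper $\some$-levels $\TauLevel_1,\dots,\TauLevel_m$, each of which is an $x$-monotone chain in the arrangement of the corresponding dual lines. First I would observe that a point $q$ on $\ell$ lies on $\TauLevel$ exactly when $q$ lies on some $\TauLevel_i$ and on or below all the other $\TauLevel_j$'s; equivalently, when $\kappa(q,i)\ge\some$ with equality ``just achieved'', and $\kappa(q,j)$ has not yet reached $\some$ above $q$ for $j\neq i$. Parameterize $\ell$ by its $x$-coordinate (or arclength). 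Along $\ell$, for each fixed $i$ the quantity $\kappa(q,i)$ — the total probability weight of points $p_i^{j}$ whose dual line lies above $q$ — changes only when $q$ crosses one of the $n_i$ dual lines of $P_i^*$, and it is monotone in the appropriate direction as we move $q$ vertically; restricted to $\ell$ it is a step function of the parameter with at most $n_i$ breakpoints. Hence the portion of $\TauLevel_i$ hit by $\ell$, i.e. the set of parameter values where $\kappa(\cdot,i)$ transitions from $<\some$ to $\ge\some$, can be located by sorting the $n_i$ intersection abscissae of $\ell$ with the lines of $P_i^*$ and sweeping; this costs $O(n_i\log n_i)$ per group and $O(n\log n)$ in total.

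Next I would merge these per-group answers. Sort all $O(n)$ intersection points of $\ell$ with $\bigcup_i P_i^*$ by their parameter along $\ell$, and sweep once from left to right, maintaining the current values $\kappa(\cdot,i)$ for every $i$ incrementally (each crossing updates exactly one $\kappa(\cdot,i)$ by one $\pr_i^j$ term). At each step I know, for the current parameter, whether $q$ is above, on, or below each $\TauLevel_i$; $q\in\TauLevel$ precisely when $q$ is on or below every $\TauLevel_i$ and on at least one. The crossings of $\ell$ with $\TauLevel$ are exactly the parameter values where this predicate flips, and there are $O(n)$ of them since $\TauLevel$, being the lower envelope of $m$ $x$-monotone chains of total complexity $O(n)$, is itself an $x$-monotone chain of complexity $O(n)$ — so a line crosses it $O(n)$ times and these crossings are detected as the sweep passes the breakpoints of the relevant step functions. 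The whole sweep is $O(n)$ after the initial $O(n\log n)$ sort, giving the claimed $O(n\log n)$ bound.

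The main obstacle, and the step I would be most careful about, is handling the lower-envelope structure correctly rather than treating the $\TauLevel_i$'s independently: a crossing of $\ell$ with some $\TauLevel_i$ is a crossing with $\TauLevel$ only if $\TauLevel_i$ is the lowest chain there, and conversely $\TauLevel$ may ``switch'' from one $\TauLevel_i$ to another at a point that is a breakpoint of neither $\TauLevel_i$ individually but where two chains cross — however, such switch points still project to intersection abscissae of $\ell$ with dual lines of one of the two groups (since between consecutive such abscissae every $\kappa(\cdot,i)$ is constant along $\ell$, hence every $\TauLevel_i$ is locally a segment of a fixed line, so the lower envelope along $\ell$ is piecewise constant with breakpoints only at those abscissae). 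Making this last claim precise — that along $\ell$ all the relevant events occur at the $O(n)$ precomputed abscissae, so the single sweep suffices — is the crux; once it is established, the incremental bookkeeping of the $\kappa(\cdot,i)$ values and of which chain currently realizes the envelope is routine, and degeneracies can be dispatched by the same perturbation argument already invoked for \lemref{tangent}.
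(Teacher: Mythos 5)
Your proposal is correct and follows essentially the same route as the paper's proof: sort the $O(n)$ intersection abscissae of $\ell$ with the dual lines of $P^*$, sweep along $\ell$ maintaining all the $\kappa(\cdot,i)$ values incrementally in $O(1)$ per event, and test at each candidate point whether it lies on some $\TauLevel_i$ and not above any other. Your extra care in justifying that all crossings of $\ell$ with $\TauLevel$ occur at these precomputed abscissae (because every $\TauLevel_i$ is locally a segment of a fixed dual line between consecutive events) is a point the paper leaves implicit, but the algorithm and the $O(n\log n)$ analysis are identical.
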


\begin{proof}
    We sort the intersections of the lines of $P^*$ with $\ell$. Let
    $\langle q_1, \ldots, q_u\rangle, \; u \leq n$, be the sequence of
    these intersection points. For every $i \leq m$, $\kappa(q_1, i)$
    can be computed in a total of $O(n)$ time. Given
    $\brc{\kappa(q_{j-1}, i) \mid 1\leq i \leq m}$,
    $\brc{\kappa(q_{j}, i) \mid 1\leq i \leq m}$ can be computed in
    $O(1)$ time. A point $q_j \in \TauLevel$ if $q_j \in \TauLevel_i$
    for some $i$ and lies below $\TauLevel_i'$ for all other
    $i'$. This completes the proof of the lemma.
\end{proof}

The following two procedures can be developed by plugging
\lemref{intersection_detection} into the parametric-search
technique~\cite{asw-actp-08, m-ccpps-91}. \smallskip
\begin{compactenum}[\qquad(A)]
    \item \itemlab{subroutine:previous} Given a point $q$, determine
    whether $q$ lies above $\UpperHull^*$ or return the tangent lines
    of $\UpperHull^*$ from $q$. This can be done in $O(n \log^2 n)$
    time.
    
    \item \itemlab{subroutine} %
    Given a line $\ell$, compute the edges of $\UpperHull^*$ that
    intersect $\ell$, in $O(n \log^3 n)$ time. (Procedure
    \itemref{subroutine} uses \itemref{subroutine:previous} and
    parametric search.)
\end{compactenum}
\smallskip

Given \itemref{subroutine}, we can now compute $\UpperHull^*$ as
follows. We fix a parameter $r > 1$ and compute a $(1/r)$-cutting%
\footnote{A \textit{$(1/r)$-cutting} of $P^*$ is a triangulation $\Xi$
   of $\Re^2$ such that each triangle of $\Xi$ crosses at most $n/r$
   lines of $P^*$.} %
$\Xi = \brc{\Delta_1, \ldots, \Delta_u}$, where $u = O(r^2)$.  For
each $\Delta_i$, we do the following. Using \itemref{subroutine} we
compute the edges of $\UpperHull^*$ that intersect $\partial
\Delta_i$. We can then deduce whether $\Delta_i$ contains any vertex
of $\UpperHull^*$. If the answer is yes, we solve the problem
recursively in $\Delta_i$ with the subset of lines of $P^*$ that cross
$\Delta_i$. We omit the details from here and conclude the following.

}

\begin{theorem}
    Given a set $\UPntSet$ of uncertain points in $\Re^2$ under the
    {\multipoint} model with a total of $n$ sites, and a parameter
    $\some \in [0, 1]$, the $\some$-hull of $\UPntSet$ can be computed
    in $O(n\log^3 n)$ time.
\end{theorem}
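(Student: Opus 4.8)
The plan is to compute the upper boundary $\UpperHull$ of $\CHSomeX{\UPntSet}$ (the lower boundary being entirely symmetric), and then obtain $\CHSomeX{\UPntSet}$ by intersecting the two $x$-monotone convex chains in $O(n)$ time; by \thmref{linear_complexity} each chain has $O(n)$ vertices, so this last step is cheap. Everything about $\UpperHull$ is carried out in the dual plane. By \lemref{tangent}, a line supporting an edge of $\UpperHull$ is $\some$-tangent at two points of $P$, so (as already noted) its dual is a vertex of the lower envelope $\TauLevel$ of the $\some$-levels $\TauLevel_{1},\ldots,\TauLevel_{m}$; conversely a vertex $q$ of $\UpperHull$ cannot lie above any $\some$-tangent line of any $P_{i}$, so $q^{*}$ passes through two vertices of $\TauLevel$ and lies above none of them, i.e.\ $q^{*}$ supports an edge of the upper boundary of the convex hull of $\TauLevel$. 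Consequently $\UpperHull^{*}$ is precisely the upper hull of $\TauLevel$ --- an $x$-monotone concave chain with $O(n)$ vertices. Hence it suffices to compute this hull, and we do so without ever materializing $\TauLevel$ itself, which may have superlinear complexity: $\TauLevel$ is touched only through the primitive below.

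That primitive is \lemref{intersection_detection}: for any line $\ell$ the (at most $n$) intersection points of $\ell$ with $\TauLevel$, sorted along $\ell$, can be reported in $O(n\log n)$ time. Plugging it into the parametric-search framework of \cite{asw-actp-08,m-ccpps-91} --- with the intersection test serving as the decision procedure driving a generic parallel computation --- yields subroutine \itemref{subroutine:previous}: decide whether a query point lies above $\UpperHull^{*}$, or else report its two tangents to $\UpperHull^{*}$, in $O(n\log^{2}n)$ time. A second round of parametric search, now with \itemref{subroutine:previous} itself as the decision procedure, yields subroutine \itemref{subroutine}: for a query line $\ell$, report the $O(1)$ edges of $\UpperHull^{*}$ that $\ell$ crosses (there are $O(1)$ of them since $\UpperHull^{*}$ is $x$-monotone and concave, so a line meets it at most twice) in $O(n\log^{3}n)$ time.

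With \itemref{subroutine} in hand, $\UpperHull^{*}$ is assembled by the recursive cutting scheme used for computing the convex hull of a level in an arrangement of lines \cite{asw-actp-08,m-ccpps-91}: fix $r>1$ and build a $(1/r)$-cutting $\Xi=\{\Delta_{1},\ldots,\Delta_{u}\}$ of $P^{*}$ with $u=O(r^{2})$; inside a triangle $\Delta_{j}$, each line of $P^{*}$ that misses $\Delta_{j}$ lies entirely on one side of it, hence adds a constant to every $\kappa(\cdot,i)$ over $\Delta_{j}$ and can be folded into the thresholds, leaving only the $\le n/r$ lines crossing $\Delta_{j}$; use \itemref{subroutine} on the $O(1)$ edges of each $\partial\Delta_{j}$ to learn the part of $\UpperHull^{*}$ meeting $\Delta_{j}$, decide which $\Delta_{j}$ actually contain a vertex of $\UpperHull^{*}$, and recurse only inside those with their $\le n/r$ crossing lines. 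Because $\UpperHull^{*}$ is an $x$-monotone concave chain --- it meets each $\partial\Delta_{j}$ in $O(1)$ points, so $\UpperHull^{*}\cap\Delta$ splits into $O(1)$ connected convex arcs --- the branching and the total subproblem size stay under control; carrying out the (standard but intricate) accounting exactly as in \cite{m-ccpps-91} solves the recursion to $O(n\log^{3}n)$, matching the running time obtained there for the Tukey-depth region $\TT$ of \secref{tukey}. Computing the lower boundary symmetrically and intersecting the two chains then gives $\CHSomeX{\UPntSet}$ within the claimed bound.

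The step I expect to be the real obstacle is the algorithmic glue rather than the combinatorics (which \lemref{tangent} and \thmref{linear_complexity} already settle): one must verify that \lemref{intersection_detection} is a legitimate decision procedure for two nested rounds of parametric search --- every comparison reducing to evaluating a bounded-degree predicate at a single parameter value, with the comparison count kept polylogarithmic --- and one must make the cutting-recursion analysis precise so that the branching introduced by $\Xi$ does not destroy the near-linear total. These are exactly the points at which \cite{asw-actp-08,m-ccpps-91} do the substantive work, and the rest is routine modulo that machinery.
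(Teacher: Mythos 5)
Your proposal follows essentially the same route as the paper: dualize, identify $\UpperHull^*$ with the upper hull of the lower envelope $\TauLevel$ of the $\some$-levels via \lemref{tangent}, use \lemref{intersection_detection} as the decision primitive for two nested rounds of parametric search yielding subroutines \itemref{subroutine:previous} and \itemref{subroutine}, and finish with the $(1/r)$-cutting recursion of \cite{asw-actp-08, m-ccpps-91}. The paper likewise omits the detailed recursion accounting, so your write-up is, if anything, slightly more explicit about where the remaining work lies.
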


{\small
   \noindent\textbf{Acknowledgments.} P. Agarwal and W. Zhang are
   supported by NSF under grants CCF-09-40671, CCF-10-12254, and
   CCF-11-61359, by A{R}O grants W911NF-07-1-0376 and
   W911NF-08-1-0452, and by an E{RD}C contract W9132V-11-C-0003.
   S.~Har-Peled is supported by NSF grants CCF-09-15984 and
   CCF-12-17462.
   S. Suri and H. Y{\i}l{}d{\i}z are supported by NSF grants
   CCF-1161495 and CNS-1035917.
}

\InESAVer{%
\newcommand{\etalchar}[1]{$^{#1}$}
 \providecommand{\CNFX}[1]{ {\em{\textrm{(#1)}}}}
  \providecommand{\CNFSoCG}{\CNFX{SoCG}}
  \providecommand{\CNFESA}{\CNFX{ESA}}
  \providecommand{\CNFPODS}{\CNFX{PODS}}
  \providecommand{\CNFWADS}{\CNFX{WADS}}

}
\InNotESAVer{%
\newcommand{\etalchar}[1]{$^{#1}$}
 \providecommand{\CNFX}[1]{ {\em{\textrm{(#1)}}}}
  \providecommand{\CNFSoCG}{\CNFX{SoCG}}
  \providecommand{\CNFESA}{\CNFX{ESA}}
  \providecommand{\CNFPODS}{\CNFX{PODS}}
  \providecommand{\CNFWADS}{\CNFX{WADS}}

}


%
%


\InNotESAVer{%
\appendix

\section{Proof of \lemref{projection}}
\apndlab{escape:unique}

\section{Computing Face Probabilities in Radial Order}
\apndlab{radial:D:D}

\begin{figure}[t]
    \centering 
\definecolor{a0a0a4}{RGB}{160,160,164}
\definecolor{000000}{RGB}{0,0,0}
\definecolor{808080}{RGB}{128,128,128}
\begin{tikzpicture}[x=1cm,y=1cm]
\clip (15,0) rectangle (18,3);
\node [color=000000,inner sep=0.5pt,above,shift={(0,3pt)}] at (0.49296,0.00762142) {$\circled{$\GroupX{2}$}$};
\node [color=000000,inner sep=0.5pt,below right,shift={(3pt,-3pt)}] at (1.37078,0.945937) {$\mpt$};
\node [color=000000,inner sep=0.5pt,above,shift={(0,3pt)}] at (2.45777,0.885647) {$\circled{$\GroupX{1}$}$};
\node [color=000000,inner sep=0.5pt,right,shift={(4pt,0)}] at (1.92125,2.11751) {$\siX{i}$};
\node [color=000000,inner sep=0.5pt,above left,shift={(-3pt,3pt)}] at (6.06623,2.36556) {$\siX{2}$};
\node [color=000000,inner sep=0.5pt,above left,shift={(-3pt,3pt)}] at (5.51536,1.86844) {$\siX{1}$};
\node [color=000000,inner sep=0.5pt,above right,shift={(3pt,3pt)}] at (6.60369,2.00193) {$...$};
\node [color=000000,inner sep=0.5pt,right,shift={(3pt,0)}] at (11.4742,1.51824) {$\mpt$};
\node [color=000000,inner sep=0.5pt,below right,shift={(3pt,-3pt)}] at (6.49617,1.36459) {$\mpt$};
\node [color=000000,inner sep=0.5pt,below right,shift={(3pt,-3pt)}] at (6.28795,0.785994) {$...$};
\node [color=000000,inner sep=0.5pt,above,shift={(0,3pt)}] at (11.4204,0.107486) {$\circled{$\SGroup'$}$};
\node [color=000000,inner sep=0.5pt,below left ,shift={(-3pt,-3pt)}] at (5.51539,0.960659) {$\siX{n}$};
\node [color=000000,inner sep=0.5pt,left,shift={(-3pt,0)}] at (10.8293,2.37814) {$\siX{i-1}$};
\node [color=000000,inner sep=0.5pt,right,shift={(3pt,0)}] at (12.0385,2.37814) {$\siX{i}$};
\node [color=000000,inner sep=0.5pt,right,shift={(3pt,0)}] at (16.3098,1.23982) {$\mpt$};
\node [color=000000,inner sep=0.5pt,above left,shift={(-3pt,3pt)}] at (16.6013,1.84811) {$\f_j$};
\node [color=000000,inner sep=0.5pt,above,shift={(0,3pt)}] at
(15.4861,1.92414) {$\siX{i}$};
\node [color=000000,inner sep=0.5pt,above,shift={(0,3pt)}] at
(17.1462,0.568163) {$\circled{$\GroupX{1}$}$};
\node [color=000000,inner sep=0.5pt,right,shift={(3pt,0)}] at
(16.8927,2.4564) {$\siX{j}$};
\draw[color=a0a0a4,line width=1,dotted] (-2.14531,0.945937) -- (1.37078,0.945937);
\draw[color=a0a0a4,line width=1,dotted] (0.637247,-0.615251) -- (2.56193,3.48108);
\draw[color=a0a0a4,line width=1,dotted] (15.3266,-0.812143) -- (17.5569,3.84246);
\draw[color=a0a0a4,line width=1,dotted] (4.7697,1.36459) -- (8.49141,1.36459);
\draw[color=000000,line width=1] (1.92125,2.11751) -- (1.37078,0.945937);
\draw[color=808080,line width=1,densely dashed] (6.49617,1.36459) -- (6.28795,0.785994);
\draw[color=a0a0a4,line width=1,dotted] (12.8309,-0.290739) -- (10.1622,3.2677);
\draw[color=808080,line width=1,densely dashed] (6.49617,1.36459) -- (6.60369,2.00193);
\draw[color=808080,line width=1,densely dashed] (6.49617,1.36459) -- (5.51539,0.960659);
\draw[color=a0a0a4,line width=1,dotted] (12.6337,3.28505) -- (10.3188,-0.24235);
\draw[color=808080,line width=1,densely dashed] (6.06623,2.36556) -- (6.49617,1.36459);
\draw[color=808080,line width=1,densely dashed] (5.51536,1.86844) -- (6.49617,1.36459);
\draw[color=000000,line width=1] (16.3098,1.23982) -- (16.8927,2.4564);
\path [fill=000000] (16.8927,2.4564) circle (1.5pt);
\path [fill=000000] (6.06623,2.36556) circle (1.5pt);
\path [fill=000000] (6.49617,1.36459) circle (1.5pt);
\path [fill=000000] (5.51536,1.86844) circle (1.5pt);
\path [fill=000000] (1.37078,0.945937) circle (1.5pt);
\path [fill=000000] (1.92125,2.11751) circle (1.5pt);
\path [fill=000000] (5.51539,0.960659) circle (1.5pt);
\path [fill=000000] (12.0385,2.37814) circle (1.5pt);
\path [fill=000000] (11.4742,1.51824) circle (1.5pt);
\path [fill=000000] (10.8293,2.37814) circle (1.5pt);
\path [fill=000000] (16.3098,1.23982) circle (1.5pt);
\path [fill=000000] (15.4861,1.92414) circle (1.5pt);
\end{tikzpicture}

    \caption{}
    \figlab{ocomp}
\end{figure}

Similar to the planar case, we can improve the computation time for
the $i$\th term to $O(n^{d-1})$ by considering the facets $\f(\SFACE)$
in radial order. In particular, let $\SRIDGE \subseteq \Sites$ be a
subset of $(d-2)$ sites. Let $\f_j$ denote the $(d-1)$-dimensional
simplex $\f(\SRIDGE \cup \brc{\query} \cup \brc{\siX{j}})$ where
$\siX{j} \not \in \SRIDGE$ and $\siX{j} \not = \siX{i}$. We can
compute the probability that $\f_j$ is a facet of $\CC$ for all facets
$\f_j$ in constant amortized time as follows. We project all sites to
the two-dimensional plane passing through $\query$ and orthogonal to
the $(d-2)$-dimensional hyperplane defined by $\SRIDGE \cup
\brc{\query}$. (Such a plane is known as an orthogonal complement.)
The hyperplane defined by $\SRIDGE \cup \brc{\query}$ projects onto
$\query$ on this plane. Moreover, each facet $\f_j$ projects to a line
segment extending from $\query$. When we need to compute the
probability that $\f_j$ is a facet of $\CC$, the set $\GroupX{1}$
includes the sites on the other side of the line supporting $\f_j$'s
projection with respect to $\siX{i}$. (See \figref{ocomp}.) We compute
probabilities for the facets $\f_j$ based on their radial order around
$\query$. The probability for the next facet in the sweep can be
computed by modifying the probability of the previous facet in
constant amortized time as we have done for the planar case, as we can
efficiently track how $\GroupX{1}$ changes. As a final note, we point
out that the total cost of all sorting involved is $O(n^{d-1}\log{n})$
which is less than the overall cost of $O(n^d)$.

\section{Membership Probability Algorithms %
   in the {\Multipoint} Model}
\apndlab{memmulti}

\subsection{The Planar Case}

Let $\UPntSet$ be an uncertain set of points in the {\multipoint}
model defined by site groups $\brc{\SiX{1},\ldots,\SiX{m}}$. We denote
the $j$\th site in $\siX{i}$ by $\sijY{i}{j}$ and its probability by
$\prijY{i}{j}$. For simplicity, we set $n_i=|\SiX{i}|$.  We define
$\Sites$ to be the set all sites, i.e. $\Sites = \bigcup_{1 \le i \le
   m}{\siX{i}}$, and set $n=\cardin{\Sites} = \sum_{1 \le i \le
   m}{n_i}$. Under this setting, we want to compute the membership
probability of a given point $\query$. Recall that the sites from a
single site group $\siX{i}$ are dependent, i.e., they cannot co-exist
in an outcome $\SetB \subseteq \Sites$ of the probabilistic
experiment.

The algorithm for the {\unipoint} model easily extends to the
{\multipoint} model. The main difference is the way we compute the
probability for an edge. The rest of the algorithm remains the mostly
the same. We now see this in more depth.

Let $\VV$ and $\CC$ be defined as before. As in the {\unipoint} model,
$\query$ is in the convex hull of $\SetB$ if and only if $\query \in
\VV$. We follow a similar strategy and decompose $\Prob{\query \in
   \VV}$ as follows:
\begin{align*}
    \Prob{\query \in \VV}%
    =%
    \Prob{\SetB = \emptyset} + \sum_{\substack{1 \le i \le m \\ 1 \leq
          j \le n_i}} \Prob{\query\sijY{i}{j} \text{ is the witness
          edge of } \query \notin \CHX{\SetB}}.
\end{align*}
The first term is trivial to compute in $O(n)$ time. We compute the
probability that $\query\sijY{i}{j}$ forms a witness edge of $\SetB$ as
follows. Let $\GroupX{i,j}$ be the set of sites to the right of the
line $\Line{\query\sijY{i}{j}}$ where the right direction is with
respect to the vector $\vect{\query\sijY{i}{j}}$. As in the {\unipoint}
model, the segment $\query\sijY{i}{j}$ is the witness edge of $\SetB$ if
and only if $\sijY{i}{j} \in \SetB$ and $\SetB \cap \GroupX{ i, j } =
\emptyset$. We can write the corresponding probability as follows:
\begin{align*}
    \Prob{\sijY{i}{j} \in \SetB \;\land\; \SetB \cap \GroupX{{i,j}}%
       =%
       \emptyset}%
    \hspace{-2cm} &\hspace{2cm}%
    =%
    \Prob{\sijY{i}{j} \in \SetB} \times \Prob{\SetB \cap
       \GroupX{{i,j}}%
       = \emptyset \st \sijY{i}{j} \in \SetB} \\
    &= %
    \Prob{\sijY{i}{j} \in \SetB} \times \prod_{1 \le k \le m}
    \Prob{\SetB \cap \GroupX{{i,j}} \cap \SiX{k}
       = \emptyset \st \sijY{i}{j} \in \SetB} \\
    &=%
    \Prob{\sijY{i}{j} \in \SetB} \times \prod_{\substack{1 \le k \le m \\
          k \not = i}} \Prob{\SetB \cap \SiX{k} \cap \GroupX{{i,j}} =
       \emptyset} \\
    &=%
    \prijY{i}{j} \times \prod_{\substack{1 \le k \le m \\ k \not = i}}
    \pth{ 1 - \sum_{l \st \sijY{k}{l} \in \GroupX{{i,j}}} \prijY{k}{l}
    }.
\end{align*}
This expression can be easily computed in $O(n)$ time. It follows that
one can compute $\inv{\mprX{\query}}$, thus $\mprX{\query}$, in
$O(n^2)$ time.

As before, the computation time can be improved to $O(n\log{n})$ by
computing the witness edge probabilities in radial order around
$\query$. Let the circular sequence $\spi(1),\spi(2),\ldots,\spi(n)$ be
the counter-clockwise order of all sites around $\query$, where each
$\spi(i)$ is a distinct site $\sijY{a}{b}$. We first compute the
probability that $\query\spi(1)$ is the witness edge in $O(n)$ time and
also remember the values of the intermediate factors used in the
computation. (The factors inside the $\prod_{1 \le k \le m}$
expression.) Then, for increasing values of $i$ from $2$ to $n$, we
compute the probability that $\query\spi(k)$ is the witness edge by
updating the probability for $\query\spi(k-1)$. As a first step to this
update, we update the values of the intermediate factors. To be more
specific, let $W_i$ denote the set of sites in the open wedge bounded
by the lines $\Line{\query\spi(i)}$ and $\Line{\query\spi(i-1)}$. Also,
for simplicity, assume that $\spi(k)=\sijY{a}{b}$ and
$\spi(k-1)=\sijY{c}{d}$. Notice that $\GroupX{{a,b}} = \GroupX{{c,d}}
\cup \brc{\sijY{c}{d}} \setminus W_i$. Then, for each site
$\sijY{k}{l}$ in $W_i$, the $k$\th factor increases by
$\prijY{k}{l}$. Also, the $c$\th factor decreases by
$\prijY{c}{d}$. Finally, we temporarily set the value of the $a$\th
factor to 1 (to cover the case $k \not = i$ in the expression). Then,
we can compute the witness edge probability for $\query\spi(k)$ by
multiplying the probability of $\query\spi(k-1)$ with
${\prijY{a}{b}}/{\prijY{c}{d}}$ and the multiplicative change in each
intermediate factor. The cost of a single update is $O(1)$ amortized,
as each site can contribute to at most 4 updates as in the {\unipoint}
case.

\subsection{The $d$-dimensional case}

All of the arguments in the $d$-dimensional algorithm are also easily
extended to the multipoint. As before, we compute $\mprX{\query}$ by
computing the probability $\Prob{\query \in \VV}$. Following the same
strategy, we decompose it as

\begin{flalign*}
    \Prob{\Bigl. \query \in \VV} & = \Prob{\Bigl. \query \altequiv \lowd(\SetBA)} \\
    & + \sum_{1 \le i \le m} \left( \sum_{1 \le j \le
           n_i}{\Prob{\Bigl. \sijY{i}{j} \altequiv \lowd(\SetBA)
              \;\land\; \query \in \VV}} \right).
\end{flalign*}

It is trivial to compute the first term in $O(n)$ time. We now show
how to compute each term inside the summations in $O(n^{d-1})$
time. This implies a total time of $O(n^d)$.

Clearly, \lemref{projection} extends to the {\multipoint} model, so we
can use $\sijY{i}{j}$-escaping facets to decompose our
probability. Given a subset of sites$\Sites_{\alpha} \subseteq \Sites
\!\setminus\!  \brc{\sijY{i}{j}}$ of size $(d-1)$, define $\f(\SFACE)$
to be the $(d-1)$-dimensional simplex whose vertices are the points in
$\Sites_{\alpha}$ and $\query$. Then,
\begin{align*}
    &%
    \Prob{\Bigl. \sijY{i}{j} \altequiv \lowd(\SetBA) \;\land\; \query
       \in \VV}%
    =%
    \Prob{\Bigl. \sijY{i}{j} \altequiv \lowd(\SetBA) \; \land\; \query'
       \in \VVP}%
    \\& \qquad%
    + \sum_{\substack{ \SFACE \subseteq {\Sites \setminus \brc{\sijY{i}{j}}} \\
          |\SFACE|=(d-1) \\ \f(\SFACE) \text{ is
          }\sijY{i}{j}\text{-escaping for $\query$}}}
    \Prob{\Bigl. \sijY{i}{j} \altequiv \lowd(\SetBA) \;\land\;
       \f(\SFACE) \text{ is a facet of } \CC}.
\end{align*}
The first term is computed recursively. We compute each term of the
summation as follows. Let $\FaceIndices$ be the set of group indices
of the sites in $\SFACE$, i.e., $\FaceIndices=\brc{\ii \st \exists \jj
   \,.\, \sijY{\ii}{\jj} \in \SFACE}$. As before, let $\GroupX{1}
\subseteq \Sites$ be the subset of sites which are on the other side
of the hyperplane supporting $\f(\SFACE)$ with respect to
$\sijY{i}{j}$. Let $\GroupX{2} \subseteq \Sites$ be the subset of
sites that are below $\sijY{i}{j}$ along the $x_d$-axis. Following the
same strategy, we write the desired probability as the probability
that all points in $\SFACE$ and $\sijY{i}{j}$ exist in $\SetB$, and
all points in $\GroupX{1} \cup \GroupX{2}$ are absent from
$\SetB$. This probability is clearly zero, if any of the following
conditions hold:

\begin{itemize}
    \item $\SFACE \cap \GroupX{2} \not = \emptyset$.
    \item $\sijY{i}{j}$ has a higher $x_d$-coordinate than $\query$.
    \item $\SFACE$ contains any two sites from the same uncertain
    point $\SiX{k}$.
    \item $\SFACE$ contains any site from $\siX{i}$.
\end{itemize}
Otherwise, we can write the probability as follows:
\begin{flalign*}
    & \Prob{\sijY{i}{j} \in \SetB \;\land\; \SFACE \cap \SetB = \SFACE
       \;\land\; \SetB \cap (\GroupX{1} \cup \GroupX{2}) = \emptyset}%
    \displaybreak[0]\\
    & \qquad = \Prob{\sijY{i}{j} \in \SetB} \,\times\, \Prob{\SFACE
       \cap \SetB = \SFACE \mid \sijY{i}{j} \in \SetB} \,\times\, %
    \displaybreak[0]\\
    & \qquad \qquad \Prob{\SetB \cap (\GroupX{1} \cup \GroupX{2}) =
       \emptyset \mid \sijY{i}{j} \in \SetB \;\land\; \SFACE \cap
       \SetB = \SFACE}%
    \displaybreak[0]\\
    & \qquad = \Prob{\sijY{i}{j} \in \SetB} \,\times\,
    \Prob{\Bigl. \SFACE
       \cap \SetB = \SFACE} \,\times\, \\
    & \qquad \qquad \Prob{\SetB \cap (\GroupX{1} \cup \GroupX{2}) =
       \emptyset \mid \sijY{i}{j} \in \SetB \;\land\; \SFACE \cap
       \SetB = \SFACE}%
    \displaybreak[0]\\
    & \qquad = \Prob{\sijY{i}{j} \in \SetB} \,\times\,
    \Prob{\Bigl. \SFACE \cap \SetB = \SFACE} \,\times\,
    \displaybreak[0]\\
    & \qquad \qquad \prod_{\substack{1 \le \ii \le m \\ \ii \not = i \\
          \ii \not \in \FaceIndices}} \bigg( \Prob{\SiX{\ii} \cap
       \SetB \cap (\GroupX{1} \cup \GroupX{2}) = \emptyset} \bigg)
    \displaybreak[0]\\
    & \qquad = \prijY{i}{j} \,\times\, \prod_{\ii,\jj \st
       \sijY{\ii}{\jj} \in \SFACE}{\prijY{\ii}{\jj}} \,\times\,
    \prod_{\substack{1 \le \ii \le m \\ \ii \not = i \\ \ii \not \in
          \FaceIndices}}{\left( 1- \sum_{\jj \st \sijY{\ii}{\jj} \in
              (\GroupX{1} \cup \GroupX{2})}{\prijY{\ii}{\jj}}
       \right)}.
\end{flalign*}

The expression takes linear time to compute and thus summation term
can be computed in $O(n^d)$ time. Then, by induction, the computation
of the term for the site $\sijY{i}{j}$ takes $O(n^d)$ time. As before,
we can improve the computation time each term to $O(n^{d-1})$ by
considering the facets $\f(\SFACE)$ in radial order. This implies a
total complexity of $O(n^d)$ for the algorithm.

\section{Computing the Probability Map in %
   $O(n^4)$ Time}
\apndlab{improve}

In this section, we describe how to compute the \emphi{probability
   map} for a given set of uncertain points on the plane in $O(n^4)$
time, rather than $O(n^5\log{n})$. For simplicity, we show how to
compute the probability associated with each \emphi{face} (or cell) of
the probability map. Our algorithm can easily be adapted to compute
the probabilities of edges and vertices as well, by taking care of
degeneracies. Also, we assume that the input is given in the
{\unipoint} model, however, we briefly explain how to extend the
algorithm to the {\multipoint} model.

The high level idea of our algorithm is as follows. Recall that the
structure of the probability map is an arrangement of $O(n^2)$ lines,
containing $O(n^4)$ faces. We first compute the membership probability
of one of the faces, say $F$, in $O(n\log{n})$ time. We then compute
the membership probabilities of the faces neighboring $F$, in $O(1)$
time per each, by modifying the probability of $F$. By repeatedly
applying the same idea of expanding into the neighbors, we can compute
the probability of all faces in $O(n^4)$ time.

To complete our algorithm, we now describe how we can compute the
probability of a face $F'$ by using the already computed probability
of one of its neighbors $F$.
Without loss of generality, assume that $F$ and $F'$ are separated by
a vertical line passing through the sites $\siX{i}$ and $\siX{j}$ and
$F$ is to the left of $F'$. Notice that the boundary separating $F$
and $F'$ is only a segment of the vertical line and does not contain
$\siX{i}$ or $\siX{j}$. Now imagine that a point $\query$ moves through
this boundary, crossing from $F$ to $F'$. It is easy to see that the
change in the membership probability of $\query$ is due to the changes
in witness edge probabilities of the segments $\query\siX{i}$ and
$\query\siX{j}$, as other sites are irrelevant. We now describe the
change in the witness edge probability of $\query\siX{i}$. The
probability of $\query\siX{j}$ changes analogously. The change in the
probability of $\query\siX{i}$ happens differently for two cases:

\begin{enumerate}
    \item {\bf $\siX{i}$ is above $\siX{j}$:} Then, $\siX{j}$ switches
    from the right side of the line $\line{\query\siX{i}}$ to its left
    side (where right direction is with respect to the vector
    $\vec{\query\siX{i}}$). Consequently, the probability of
    $\query\siX{i}$ changes by a factor of $\frac{1}{\inv{\priX{j}}}$.
    \item {\bf $\siX{i}$ is below $\siX{j}$:} Then, $\siX{j}$ switches
    from the left side of the line $\line{\query\siX{i}}$ to its right
    side. Consequently, the probability of $\query\siX{i}$ changes by
    a factor of $\inv{\priX{j}}$.
\end{enumerate}

The changes clearly require constant time operations, and thus the
membership probability of $F'$ can be computed in $O(1)$ time.

The extension of this technique to the {\multipoint} model is
straightforward. The only major difference is that we need to remember
\InNotESAVer{(similar to what is done in \apndref{memmulti})} the
intermediate factors when switching from face to face, as updating the
witness edge probabilities requires updating these factors first. The
total cost of an update remains $O(1)$ because each face switch
updates one intermediate factor of two witness edge probabilities.

}
\end{document}